\documentclass[journal]{IEEEtran}
\IEEEoverridecommandlockouts

\newif\ifarxiv\arxivtrue

%
\usepackage{cite}

%
\ifCLASSINFOpdf
\else
\fi

%
\usepackage{amsthm}
\newtheorem{lemma}{Lemma}
\newtheorem{remark}{Remark}

%
\usepackage{amsmath,amssymb,mathtools,dsfont,bm}

%
\usepackage{color}

%
\usepackage{algorithm,algorithmic}

\hyphenation{op-tical net-works semi-conduc-tor}

\DeclarePairedDelimiter{\norm}{\lVert}{\rVert}

\usepackage{subcaption}


\captionsetup[table]{name=Tab.,position=bottom}

\usepackage{amsmath,graphicx,float,cite,dsfont,amssymb}
\usepackage{psfrag}
\usepackage{color}
\usepackage{pgfplots}
\usepackage[utf8]{inputenc}
\usepackage[english]{babel}

\usepackage{tikz}
\usetikzlibrary{arrows}
\usetikzlibrary{patterns}
\tikzset{>=latex'}
\usetikzlibrary{shapes.geometric}
\usetikzlibrary{calc}
\usetikzlibrary{positioning, shapes,fadings,through}

\definecolor{color1}{rgb}{0.0503832136000000,	0.0298028976000000,	0.527974883000000}%
\definecolor{color2}{rgb}{0.493426212287426,	0.0115244183443711,	0.658032128750000}%
\definecolor{color3}{rgb}{0.796382339000000,	0.277979917955297,	0.471322439808427}%
\definecolor{color4}{rgb}{0.972230291500000,	0.581718982058214,	0.254084505027532}%
\definecolor{color5}{rgb}{0.940015097000001,	0.975158357000000,	0.131325517000000}%


\usetikzlibrary{decorations}
\pgfdeclaredecoration{lightning bolt}{draw}{
	\state{draw}[width=\pgfdecoratedpathlength]{
		\pgfpathmoveto{\pgfpointorigin}%
		\pgfpathlineto{\pgfpoint{\pgfdecoratedpathlength*0.6}%
			{-\pgfdecoratedpathlength*.1}}%
		\pgfpathlineto{\pgfpoint{\pgfdecoratedpathlength*0.55}{0pt}}%
		\pgfpathlineto{\pgfpoint{\pgfdecoratedpathlength}{0pt}}%
		\pgfpathlineto{\pgfpoint{\pgfdecoratedpathlength*0.4}%
			{\pgfdecoratedpathlength*.1}}%
		\pgfpathlineto{\pgfpoint{\pgfdecoratedpathlength*0.45}{0pt}}%
		\pgfpathclose%
	}%
}

\usepackage{xcolor}
\begin{document}
	%
	\title{Immersive Virtual Realities via Cooperative NOMA in Hybrid Cloud/Mobile Edge Computing Networks}
	\title{Extended Reality via Cooperative NOMA in Hybrid Cloud/Mobile-Edge Computing Networks}
	
	\author{\IEEEauthorblockN{Robert-Jeron Reifert,~\IEEEmembership{Graduate Student Member,~IEEE,}
			Hayssam Dahrouj,~\IEEEmembership{Senior Member,~IEEE,}\newline and
			Aydin Sezgin,~\IEEEmembership{Senior Member,~IEEE}}
		\thanks{%
			\ifarxiv%
			This paper is accepted for publication at the IEEE Internet of Things Journal.\newline
			\else%
			\fi%
			This work was supported in part by the German Federal Ministry of Education and Research (BMBF) in the course of the 6GEM Research Hub under grant 16KISK037.\newline
			\IEEEauthorblockA{Robert-Jeron Reifert and Aydin Sezgin are with Digital Communication Systems, Ruhr University Bochum, Bochum, Germany. (Email: \{robert-.reifert,aydin.sezgin\}@rub.de).\newline
				Hayssam Dahrouj is with the Department of Electrical Engineering, University of Sharjah, Sharjah, United Arab Emirates. (Email: hayssam.dahrouj@gmail.com).}\\
		}
}

\maketitle

\begin{abstract}
Extended reality (XR) applications often perform resource-intensive tasks, which are computed remotely, a process that prioritizes the latency criticality aspect. To this end, this paper shows that through	leveraging the power of the central cloud (CC), the close proximity of edge computers (ECs), and the flexibility of uncrewed aerial vehicles (UAVs), a UAV-aided hybrid cloud/mobile-edge computing architecture promises to handle the intricate requirements of future XR applications. 
In this context, this paper distinguishes between two types of XR devices, namely, strong and weak devices. The paper then introduces a cooperative non-orthogonal multiple access (Co-NOMA) scheme, pairing strong and weak devices, so as to aid the XR devices quality-of-user experience by intelligently selecting either the direct or the relay links toward the weak XR devices. 
A sum logarithmic-rate maximization problem is, thus, formulated so as to jointly determine the computation and communication resources, and link-selection strategy as a means to strike a trade-off between the system throughput and fairness. Subject to realistic network constraints, e.g., power consumption and delay, the optimization problem is then solved iteratively via discrete relaxations, successive-convex approximation, and fractional programming, an approach which can be implemented in a distributed fashion across the network. 
Simulation results validate the proposed algorithms performance in terms of log-rate maximization, delay-sensitivity, scalability, and runtime performance. The practical distributed Co-NOMA implementation is particularly shown to offer appreciable benefits over traditional multiple access and NOMA methods, highlighting its applicability in decentralized XR systems.
\end{abstract}
\begin{IEEEkeywords}
Extended reality, cooperative NOMA, central cloud, cloud computing, mobile-edge computing, hybrid networks, uncrewed aerial vehicles.
\end{IEEEkeywords}

%
\IEEEpeerreviewmaketitle

\section{Introduction}
The technical requirements for the sixth generation (6G) of wireless communication systems, particularly the vision of extended reality (XR) and their conjunction with emerging technologies, such as the metaverse, become a major driver in developing sophisticated techniques for the wireless physical-layer, access mechanisms, computing infrastructures, and resource management schemes \cite{8766143,8869705,9806418,10070393}. For instance, protocol testing in medicine, extended teaching methods in education, and visualization or modeling in engineering require enhanced virtual and telepresence experiences \cite{9363888,waisberg2023apple}.
The intricate requirements of XR pose a burden on the underlying 6G infrastructure in terms of big data processing, computationally intensive video rendering, connectivity issues, e.g., cell-borders coverage, and latency-criticality \cite{10033423}. For example, a latency beyond $10$~ms between control input and the corresponding video effect may cause motion sickness to XR users. To this end, this paper tackles computation, connectivity, and latency issues via 
a framework for managing XR devices quality-of-service by means of optimizing the network load-balancing in a hybrid cloud/mobile-edge computing (MEC) architecture. 

More precisely, given the trend of proliferated mobile devices, e.g., head-mounted displays for XR, each having a variable level of processing capabilities and battery life, 6G networks aspire to provide a multi-purpose infrastructure that is capable of accommodating the multiple system requirements. On one hand, such systems are expected to connect devices and provide service levels within the application's requirements. On the other hand, such systems promise to enable computational offloading to process complicated tasks remotely \cite{9830429}.
Along the same lines, cloud computing at the central cloud (CC) provides powerful computation resources, but introduces relatively large connection delays, which forms a \emph{single-point-of-failure}. Edge computers (ECs), on the other hand, are deployed in close proximity to the edge users, thereby reducing connection latency, albeit providing less computational resources with stricter power constraints \cite{8951269}. Combining the advantages of the CC and the ECs through a hybrid CC/MEC network architecture, therefore, promises to offer valuable joint computational and communication capabilities \cite{10034763,emr,8612452,7996346}.
For instance, providing connectivity to poorly served areas often calls for the use of uncrewed aerial vehicles (UAVs) so as to extend the radio access network (RAN) borders, which is bound to boost the mobile EC prospects \cite{9461747,9738797,pham2023emergence}.

From a multiple access scheme perspective, the conventional orthogonal multiple access (OMA) methods, e.g., based on the time or frequency domain, often fail to support the large number of devices in dense networks \cite{7973146}. For instance, the use of beamforming via multiple antennas, referred to as space division multiple access (SDMA), is capable of serving devices simultaneously; however, the number of served links in SDMA is limited by the number of deployed antennas. Non-orthogonal multiple access (NOMA) becomes, thus, an alternative to serve multiple devices through sharing the same time-frequency resource, supporting device numbers beyond the limitation of OMA \cite{7973146}.
Recently, cooperative non-orthogonal multiple access (Co-NOMA) emerged as a cooperative scheme for enhancing the system capacity, spectral efficiency, reduced latency, reliability, flexibility, and user fairness \cite{9143270}. Co-NOMA extends NOMA by means of including the possibility of device cooperation. Since blockages cause adverse channel conditions for some devices, more equitable performance is enabled by device relaying in Co-NOMA, which motivates its adoption in XR systems.

Along such lines, this paper considers a hybrid CC/MEC system, where the CC connects to base stations (BSs) through capacity-limited fronthaul links, and where the UAVs aim at boosting the edge connectivity. The network then aims at serving the XR devices, which in turn are categorized as either strong or weak devices. The paper then employs a Co-NOMA framework within such a UAV-aided hybrid network to specifically enable the 6G use-cases of XR, which relies on joint communication and computations resource allocation. In XR, control inputs from the XR device, e.g., field of view and movements, are sent to the CC or an EC\footnote{UAVs and ECs are interchangeably used throughout the manuscript.}. Then the XR scene is rendered and encoded into a video stream, which is transmitted back to the XR device. In an effort to provide a digitally equitable service to the weak and strong XR devices, this paper becomes the first of its kind that considers maximizing the network-wide sum-of-logarithmic-rates (log-rate) utility in a Co-NOMA-empowered XR setup using a hybrid CC/MEC network. The paper, particularly, focuses on optimizing the computation allocation, physical layer resources (i.e., beamforming vectors and power allocations), and link layer resources (i.e., link-selection, clustering, and time-split factors). The problem is formulated subject to realistic network constraints including computation capacity, physical layer constraints (i.e., fronthaul capacity, achievable rate, and power consumption), and overall per-device delay constraints.

\subsection{Related Work}
We next overview topics related to this paper in terms of cloud-, MEC-, and UAV-aided networks, interference management adopting SDMA, NOMA, and Co-NOMA, as well as XR systems design and resource management.
\textbf{CC/MEC networks:} Jointly managing communication and computation resources in MEC networks is a timely subject, that is considered in \cite{8016573,10033423}. While \cite{8016573} surveys the communication perspective, \cite{10033423} considers an explicit XR use-case. The proposed reinforcement learning-based scheme in \cite{10033423} achieves millisecond-latency and deals with the heavy computation burden of 360-degree video streaming. 
In emergency response \cite{9597869}, BS breakdown scenarios \cite{8757041}, harsh	factory environments \cite{8647441}, or cell-edges \cite{8959360}, UAVs particularly come into play so as to ensure opportunistic connectivity \cite{8434285,8764580}. Specifically, in production lines with multiple machines, large-scale depots, and remote-assistance scenarios, UAVs can aid the XR connectivity. For instance, a computation rate maximization problem is considered for a UAV-aided mobile-edge computing network in \cite{8434285}. Contrary to the single UAV setup, multiple UAVs are considered in \cite{8764580} for energy efficient resource allocation by optimizing user-association, computation rate allocation, and location planning.
None of the above works, however, consider the specific interplay of UAVs and the CC. The works \cite{8885877,8664595} show that a network with CC and EC coexistence is able to effectively minimize the overall latency, which is particularly important for XR applications in order to provide an immersive experience.
All of the previous schemes (except the one described in \cite{10033423}) rely on time-domain OMA, and are not suitable to effectively serve dense networks. Further, the above schemes do not consider the multi-user, inter-UAV, and CC-EC interference management aspect.

\textbf{Multiple access techniques:} For interference management, SDMA relies on multi-antenna beamforming to separate users in the spatial domain. For example, the work \cite{10008481} considers a joint communication and computation resource allocation using SDMA in hybrid cloud/mobile-edge computing networks, dealing with multi-user, inter-UAV, and CC-EC interference. Reference \cite{10008481}, in particular, focuses on maximizing a weighted-sum rate, and shows how the performance is strongly coupled with the number of available antennas. NOMA, on the other hand, employs successive interference cancellation, and is shown to achieve enhanced performance metrics, e.g., maximizing the sum computation rate \cite{9461747}, minimizing the energy consumption \cite{8543183}, and enabling cooperative computation \cite{8951269}. References \cite{9461747,8543183,8951269}, however, do not capture the impact of NOMA on hybrid CC/MEC networks.
To this end, along with reference \cite{10033423}, the work \cite{10034763} considers the rate-splitting (RS) paradigm as a generalization of NOMA. While RS is shown to both enable XR services, and outperform the state-of-the-art schemes under strict power and delay constraints, both works \cite{10033423} and \cite{10034763} do not consider the potential benefits of device cooperation. 
Such aspect is, in fact, of measurable importance especially in situations where some of the devices are blocked. To this end, Co-NOMA goes beyond NOMA and RS, emerging as a solution that exploits the device-to-device (D2D) link in order to enhance the service of weakly connected devices \cite{s23083958,9259258,7445146}. While references \cite{s23083958,9259258,7445146} emphasize the Co-NOMA communication performance improvement, the analysis of the computation and delay aspects therein remains an open problem.
More recently, in a single-EC setup, some works consider Co-NOMA-enabled MEC for minimizing energy consumption \cite{9325063}, for maximizing computation efficiency \cite{9345931}, and in a cognitive radio setup \cite{9749756}.
An end-to-end delay analysis of a relaying network is considered in \cite{7342977}. The impact of Co-NOMA in hybrid networks consisting of CC and multiple UAV-enabled ECs remains, however, majorly unstudied.

\textbf{XR system design:} From an XR system design perspective, the recent literature touches upon both the prospects and challenges of XR systems, e.g., an overview of the XR technology can be found in \cite{kovacova2022immersive}, the interplay with edge computing in 6G is discussed in \cite{10.1007/978-981-15-3075-39}, and various enabling architectures are proposed in \cite{10033423,9363888,9894514,8612452,chaccour2023joint}. \cite{9363888}, particularly, describes computationally exhaustive tasks, the battery limitations, and the stringed communication requirements, such as reliability and latency, as major technical challenges of XR. In an effort to ease both computation and power requirements at the XR device, cloud computing promises to broaden the applicability of XR.
A generalized view of an XR system within a cloud-based framework is given in \cite{9363888}, which considers an adaptive video bitrate to cope with mixed-critical network traffic. 
The work \cite{9894514} covers the edge computing aspect in an XR system by tackling the CC connection-delay. Reference \cite{9894514}, particularly, optimizes the execution delay and energy consumption using a deep reinforcement learning approach for managing the offloading decisions. In an effort to combine the cloud and edge computing perspectives, \cite{8612452} considers the interplay of user-layer, edge-layer, and cloud-layer in a hierarchical computation architecture, revealing improved energy and delay performance.
Yet, advanced wireless communication methods, e.g., beamforming or Co-NOMA, and the relevant impact of hybrid networks remain open issues for future XR networks. To this end, this paper focuses on offloading computationally intensive tasks to the hybrid CC/MEC network, and exploring how Co-NOMA can tackle parts of the reliability and latency requirements of XR systems.

\subsection{Contribution}
Unlike the aforementioned references, this paper considers a hybrid network consisting of multi-antenna BSs and UAVs, which collaboratively serve a number of XR devices. The CC manages the core-network functions and connects to the BSs via fronthaul links. The UAVs, on the other hand, perform communication, computation, and resource management at the network edge as MEC decentralized platforms. Further, to jointly enhance the system fairness and cope with the adverse channel conditions, we employ a Co-NOMA framework, where the strongly connected XR devices (hereafter denoted by strong devices) may aid the communication toward the weakly connected XR devices (hereafter denoted by weak devices) using D2D relaying. 

The paper then considers the problem of finding a reasonable trade-off between throughput and fairness, i.e., it maximizes the network-wide log-rate of the considered hybrid CC/MEC network. The paper then manages the joint communication and computation resources, e.g., data rate, beamforming vectors, link-selection, clustering, direct/relay time-split factors, and computation allocation subject to XR use-case considerations, namely, computation capacity, power, link-selection, achievable rate, and per-device delay constraints. The paper specifically solves such a non-convex, mixed-integer log-rate maximization problem using $\ell_0$-norm relaxation, successive-convex approximation (SCA), and fractional programming (FP), resulting in a computationally tractable iterative algorithm. 
The algorithm can conveniently be implemented in a decentralized fashion, by properly coordinating the resource management among the individual computation platforms (CC and UAVs). Numerical simulations validate the enhanced performance of Co-NOMA over SDMA and NOMA in terms of log-rate, delay, and scalability. The results of the paper particularly show how the direct/relay link selection and time-split factors have a strong impact on the system performance. The distributed resource management (DRM) is particularly shown to approach the centralized resource management (CRM) performance, all while depicting a decent convergence behavior, runtime, and practical applicability. 
The contributions of this paper can be summarized as follows:
\begin{itemize}
\item[1)] {\textit{UAV-aided hybrid cloud/mobile-edge computing network:}}
By means of combining the strong computation capabilities of the CC, the reduced connection delay of ECs, and the flexibility of UAVs, this paper proposes a UAV-aided hybrid CC/MEC framework to tackle the XR requirements in terms of latency-criticality and computation intensity perspectives. The considered network performance strongly relies on the appropriate joint management of communication and computation resources, i.e., beamforming vectors, transmit powers, allocated data rates, and computation rates. That is, the paper considers that the CC is connected to the BSs via capacity-limited fronthaul links, which adds to the overall delay metric. Further, the UAVs are assumed to be battery-limited, resulting in strict 
power limitations.
\item[2)] \textit{Co-NOMA mechanism for equitable XR devices:}
The paper system model allows a flexible utilization of the Co-NOMA framework. For instance, the strong devices, e.g., XR headsets, are served directly via BSs or UAVs, whereas the weak devices can be either served directly or through D2D relaying via their associated strong device. In addition to the link-selection, the communication time-split factor, i.e., the time allocation to both the direct service and the relaying service, is another system parameter that is carefully optimized in the paper.
The considered objective herein incorporates the fairness aspect by maximizing a log-rate function, which strikes a trade-off between the throughput and fairness measures.
\item[3)] \textit{Distributed optimization framework:}
Such log-rate maximization problem is formulated subject to computation capacity, link-selection, achievable rate, fronthaul capacity, maximum transmit power, power consumption, and per-device delay constraints. The paper then solves such a mixed-integer, non-convex optimization problem iteratively using discrete relaxations, SCA, and FP. One highlight of the proposed algorithm is that it can be implemented in a distributed fashion across the CC and the multiple EC platforms.
\item[4)] \textit{Numerical simulations:}
The impact and interplay of different network parameters on the proposed framework are analyzed via expressive numerical simulations. The CRM and DRM implementation of Co-NOMA are extensively compared to the centralized and distributed SDMA and NOMA schemes.
Specifically, the flexibility of the Co-NOMA scheme to adapt to different network scenarios and deployment cases is shown to enhance the total log-rate performance, i.e., the throughput, fairness, delay, and computational aspects. Further, the DRM is shown to perform close to the centralized scheme, to outperform the benchmarks in various setups, and to exhibit a valuable scalabilty in terms of runtime in dense networks.
\end{itemize}

\subsection{Organization}
In Section~\ref{sec:sysmod}, the XR setup, network setup, and computation and communication architecture are presented. The optimization problem is then formulated in Section~\ref{sec:prob_section}. The solution approach to solve such problem is described in Section~\ref{sec:alg}, which presents the reformulation steps, outlines the convex problem, and presents the distributed resource management algorithm. Simulation results are then presented in Section~\ref{sec:sim}. At last, we conclude the paper in Section~\ref{sec:con}.

\section{System Model}\label{sec:sysmod}
The system model considered in this paper is related to three parts, namely, the XR setup and computation architecture, the network setup including topology and association, and the transmission scheme based on Co-NOMA.
For accessibility, Tab.~\ref{tb:symbollist} provides a list of the symbols used in this manuscript. 
\begin{table}[t]
\renewcommand{\arraystretch}{1.0}
\centering
\begin{tabular}{c l}
	\hline
	Symbol & Description/meaning\\\hline
	$a$ & Quadratic transform auxiliary variable\\
	$b$ & BS index\\
	$c$ & CC index\\
	$d$ & Co-NOMA device pairing, data size\\
	$e$ & EC index\\
	$f$ & Computation allocation\\
	$g$ & Strong-to-weak device channel\\
	$h$ & BS-to-device and UAV-to-device channels\\
	$i$ & Mathematical auxiliary index\\
	$j$ & Mathematical auxiliary index\\
	$k$ & Device index\\
	$l$ & Antenna number/index\\
	$n$ & Noise\\
	$o$ & Computation transform parameter\\
	$p$ & Transmit powers\\
	$q$ & Aggregate beamforming vectors\\
	$r$ & Rates\\
	$s$ & Strong device index\\
	$t$ & Delay\\
	$u$ & CC beamformers\\
	$v$ & EC beamformers\\
	$w,W$ & Weak device index, transmit bandwidth\\
	$x$ & Device signal\\
	$y$ & Received signal\\
	$z$ & BS-device/link selection auxiliary variable\\\hline
	$\alpha$ & Lemma~\ref{lma3} quadratic transform auxiliary variable\\
	$\beta$ & Weights $\ell_1$-norm\\
	$\gamma,\Gamma$ & SINR variable, SINR\\
	$\delta$ & Normalization constant\\
	$\eta$ & Computational complexity\\
	$\theta,\Theta$ & Elevation, delay\\
	$\Lambda$ & Fronthaul delay\\
	$\mu$ & Computation transform parameter\\
	$\nu$ & Time-slot fraction\\
	$\sigma,\Sigma$ & Noise, sum\\
	$\tau$ & Lemma~\ref{lma1} and Lemma~\ref{lma3} auxiliary variable\\
	$\Upsilon$ & Direct or relay link index\\
	$\Phi$ & Operational power\\
	$\chi$ & Lemma~\ref{lma1} auxiliary function\\
	$\Psi$ & Lemma~\ref{lma1} and Lemma~\ref{lma3} auxiliary function\\
	$\omega,\Omega$ & Convergence threshold, worst-case iterations\\\hline
\end{tabular}
\vspace*{-.15cm}
\caption{List of symbols.}
\label{tb:symbollist}
\end{table}%

\subsection{XR Setup and Computation Architecture}
\begin{figure}[!t]
\centering
\includegraphics[width=\linewidth]{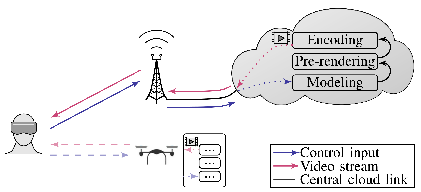}
\vspace*{-.2cm}
\caption{Example XR system consisting of a head-mounted display, a BS (alternatively a UAV), and the CC (alternatively an on-chip EC).}
\label{xr_mdl}
\end{figure}
In the considered XR setup, we assume that XR devices, such as head-mounted displays, tablets, smartphones, smartwatches, etc., require specific computation services offered by the CC or EC infrastructure. An illustrative example of such system is shown in Fig.~\ref{xr_mdl}, where a head-mounted display receives a 360-degree video stream, which is pre-rendered and encoded at the CC. In parallel, control input is taken from the XR device. That is, movement data or field-of-view impact the XR model. Hence, modeling, pre-rendering, and encoding become highly coupled at the cloud \cite{9363888}. This work also assumes that the control input requires transmission of data in orders of magnitude less than the 360-degree video stream, which allows to consider the computation and the downlink transmission aspects jointly \cite{10033423}.
Since we assume that the BSs cover the core-network part, i.e., every XR device within the RAN borders, some devices might be positioned at or beyond the cell-edge. As an alternative to being served by the CC, the XR device can be served by a UAV with on-chip EC, which offers better connectivity links, albeit using lower computation capabilities than the CC.

Each network device requires tasks, e.g., field-of-view pre-rendering, tracking, or detection \cite{10033423}, to be computed remotely at the corresponding computing platform, i.e., either at the CC or at one of the ECs. The task of device $k$ requires a specific amount of computation cycles $F_k$. Upon completing the computation task, each platform sends the data output to the associated devices. Let the \emph{computation vector} $\bm{f}\in\mathbb{N}^K$ denote the allocated computation cycles to compute the devices' tasks. In other words, each computation platform assigns $f_k$ cycles/s for computing device $k$'s task, where $k$ is an associated device.
Each platform is subject to the following maximum computation capacity constraint
\begin{equation}
\sideset{}{_{k\in\mathcal{K}_\text{c}}}\sum f_{k} \leq f_{c}^{\text{max}}; \quad \sideset{}{_{k\in\mathcal{K}_e}}\sum f_{k} \leq f_{e}^{\text{max}}, \; \forall e\in\mathcal{E}, \label{eq:maxcompcap}
\end{equation}
where $f_{c}^{\text{max}}$ and $f_{e}^{\text{max}}$ are the nominal maximum computation capacities of the CC and EC $e$, respectively.

\subsection{Network Setup}
The hybrid CC/MEC network considered in this paper consists of one central cloud and $E$ mobile-edge computing components. The CC includes a central processor (CP) responsible for managing the CC's operation in terms of computation and communication. That is, the CP performs computation of the core-devices' tasks and assigns the communication resources, i.e., the CP performs most network functions, e.g., encoding and precoder design \cite{7809154}. The network includes $B$ multi-antenna BSs, where BS $b$ is connected to the CC via a fronthaul link of limited capacity $R_b^\text{max}$. Such BSs serve the XR core-devices positioned within the RAN borders. At the network edge, the EC components are then implemented on multi-antenna UAVs, with on-chip computation capabilities, so as to serve the edge-devices. Each BS $b$ has $L_b$ antennas, and each UAV $e$ has $L_e$ antennas.
The set of ECs is denoted by $\mathcal{E}=\{1,\cdots,E\}$, the set of BSs by $\mathcal{B}=\{1,\cdots,B\}$, and the set of $K$ single-antenna XR devices by $\mathcal{K}=\{1,\cdots,K\}$. In the context of CC/EC coexistence, we denote core-devices located in the central network part and associated with the CC as $\mathcal{K}_\text{c}\subseteq\mathcal{K}$, and the edge-devices associated with EC $e$ as $\mathcal{K}_e\subseteq\mathcal{K}$. Such sets are disjoint, where a device $k$ can either be associated with the CC or one EC, i.e., $\mathcal{K}_\text{c} \cap \mathcal{K}_e  = \emptyset, \forall e\in\mathcal{E}$ and $\mathcal{{K}}_e \cap \mathcal{{K}}_e'  = \emptyset, \forall e\neq e'$. In this work, the sets $\mathcal{K}_e$ and $\mathcal{K}_c$ are chosen based on the device locations. Fig.~\ref{sys_mdl} illustrates such a network which consists of $2$ BSs serving $3$ core-device pairs, and $2$ UAVs, each serving one device pair. 
\begin{figure}[!t]
\centering
\includegraphics[width=\linewidth]{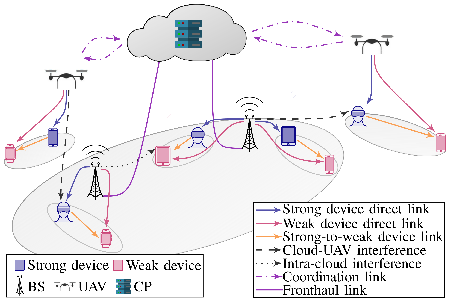}
\vspace*{-.2cm}
\caption{Hybrid network with $2$ UAVs, $2$ BSs, and $10$ devices.}
\label{sys_mdl}
\end{figure}


\subsection{Communication Architecture}
After the first phase of computing the tasks, the resulting data (e.g., a 360-degree video stream \cite{10033423}) is sent to the devices under a downlink Co-NOMA-based transmission scheme. Within the considered XR setup, there are multiple connected devices, each experiencing some channel conditions toward the BSs or UAVs. For example, the XR headsets act as strong devices due to the higher line-of-sight probabilities toward the BSs and UAVs. Smartwatches, on the other hand, tend to be blocked by the human body, or experience high-mobility in an unpredictable manner, and can thus be treated as weak devices. In fact, such classification highlights the need to tackle blockage or adverse channel conditions on BS/UAV links using direct D2D transmissions. Depending on the channel strengths or proximity to the radio transmitters, the devices are paired as strong and weak devices, i.e., similar to \cite{9259258}. The paper then utilizes a time-division multiplexing scheme where, in the first time slot, the network serves the strong devices and a subset of weak devices. In the second time slot, the BSs and the UAVs remain idle, while some strong devices forward the weak devices' messages decoded in the first time slot to their associated weak devices.

Due to the dynamic nature of communication networks, the channel condition of weak devices varies over time. 
This paper, therefore, proposes a hybrid transmission scheme, where the link selection, i.e., choosing whether the weak device would be served by the strong devices or directly by the CC or one of the ECs, is part of the optimization problem.
Let the set of strong devices be $\mathcal{K}^\text{str}\subseteq\mathcal{K}$, and let the set of weak devices be $\mathcal{K}^\text{weak}\subseteq\mathcal{K}$. In this context, the core-devices can figuratively lay within the proximity of the BSs. Edge-devices, on the other hand, lay beyond the core-network borders. For the ease of presentation, devices are assumed to be generated in pairs, where the channel quality determines whether a device is denoted as strong or weak \cite{e19070362}, see Fig.~\ref{sys_mdl}. Note that this paper optimizes the resource allocation within resource-blocks where the channel conditions remain constant, a routine that can be called whenever the channel changes. 

\subsection{Received Signal Model}
At the CC, the data initially needs to be forwarded over the fronthaul links to the BSs, which in turn perform modulation, precoding, and specific radio tasks. At the edge side, the ECs forgo the need of an additional fronthaul link and directly transmit the data to their respective devices. As such, the message to each device is encoded into signals $x_k$, where $x_k\sim\mathcal{CN}(0,1)$, $\forall k\in\mathcal{K}$, and all the $x_k$'s are independent and identically distributed (i.i.d.), circularly symmetric random variables. 

Due to the diverse nature of communication links across the network, this paper defines the following list of channel vectors: $(i)$ The channel vector from BS $b$ to device $k$ is $\bm{h}_{b,k}\in\mathbb{C}^{L_\text{b}}$; $(ii)$ The channel vector from UAV $e$ to device $k$ is $\bm{\tilde{h}}_{e,k}\in\mathbb{C}^{L_\text{e}}$; $(iii)$ The channel value from strong device $s$ to weak device $w$ is ${g}_{s,w}\in\mathbb{C}$. In the first time slot, we denote the aggregate channel vector of device $k$ by $\bm{h}_{k} = [\bm{h}_{1,k}^T,\cdots,\bm{h}_{B,k}^T,\tilde{\bm{h}}_{1,k},\cdots,\tilde{\bm{h}}_{E,k}]^T$. We assume full knowledge of the channel state information at the transmitters (CSIT) for both the BS-device and UAV-device links. 

We also let $\bm{u}_{b,k}^{(d)}\in\mathbb{C}^{L_\text{c}}$ and ${\bm{v}}_{e,k}^{(d)}\in\mathbb{C}^{L_\text{e}}$ be the \emph{beamforming vectors} of device $k$'s signal at BS $b$, and at UAV $e$, respectively, where the superscript $(d)$ denotes the \emph{direct} link between each transmitter and its respective device. Similarly, the vectors $\bm{u}_{b,k}^{(r)}\in\mathbb{C}^{L_\text{c}}$ and ${\bm{v}}_{e,k}^{(r)}\in\mathbb{C}^{L_\text{e}}$ are used to refer to the Co-NOMA-enabled \emph{relayed} beamforming vectors, i.e., the beamforming vectors from each transmitter (CC or EC) to the strong device devoted to transmit the message intended for the weak device. In this paper, only weak devices are served by either a direct or relayed link, thus all strong devices $s\in\mathcal{K}^\text{str}$ are served directly by a CC or one of the ECs, i.e., $\bm{u}_{b,s}^{(r)}=\bm{0}_{L_\text{c}}$ and ${\bm{v}}_{e,s}^{(r)}=\bm{0}_{L_\text{e}}$.
For any device $k$, let the \emph{direct} aggregate beamforming vector be $\bm{q}_{k}^{(d)} = [(\bm{u}_{1,k}^{(d)})^T,\cdots,(\bm{u}_{B,k}^{(d)})^T,({\bm{v}}_{1,k}^{(d)})^T,\cdots,({\bm{v}}_{E,k}^{(d)})^T]^T$ and the \emph{relay}-intended aggregate beamforming vector be $\bm{q}_{k}^{(r)} = [(\bm{u}_{1,k}^{(r)})^T,\cdots,(\bm{u}_{B,k}^{(r)})^T,({\bm{v}}_{1,k}^{(r)})^T,\cdots,({\bm{v}}_{E,k}^{(r)})^T]^T$. Thus, the set of all beamforming vectors is given by $\bm{q}=\big[(\bm{q}_{1}^{(d)})^T,\cdots,(\bm{q}_{K}^{(d)})^T,(\bm{q}_{1}^{(r)})^T,\cdots,(\bm{q}_{1}^{(r)})^T\big]^T$.

Additionally, since each weak device is served either via the direct link or via the relay link, we get the following constraint:
\begin{equation}
\norm[\big]{\norm[\big]{\bm{{q}}_{w}^{(d)}}_{2}^{2}}_0 + \norm[\big]{\norm[\big]{\bm{{q}}_{w}^{(r)}}_{2}^{2}}_0 \leq 1, \quad \forall w\in\mathcal{K}^\text{weak},\label{eq:dir_rel_decision}
\end{equation}
\begin{figure*}[t]
\normalsize
\begin{align}
	y_s^{(1)} &= \bm{h}_s^H {\bm{q}_s^{(d)}} x_s + \hspace*{-.2cm}&\bm{h}_s^H (\bm{q}_{d_s}^{(d)}+\bm{q}_{d_s}^{(r)}) x_{d_s} &+ \sideset{}{_{s'\in\mathcal{K}^{\text{str}}\backslash\{s\}}}\sum \hspace*{-.25cm}&\bm{h}_{s}^H \bm{q}_{s'}^{(d)} x_{s'} &+ \sideset{}{_{w\in\mathcal{K}^\text{weak}\backslash\{d_s\}}}\sum \bm{h}_s^H (\bm{q}_{w}^{(d)}+\bm{q}_{w}^{(r)}) x_{w} \hspace*{.1cm}+ n_s \label{eq:y_m}\\
	y_w^{(1)} &= &\bm{h}_w^H (\bm{q}_w^{(d)}+\bm{q}_w^{(r)}) x_w &+ \sideset{}{_{s\in\mathcal{K}^{\text{str}}}}\sum &\bm{h}_{w}^H {\bm{q}_{s}^{(d)}} x_{s} &+ \sideset{}{_{w'\in\mathcal{K}^\text{weak}\backslash\{w\}}}\sum \bm{h}_w^H (\bm{q}_{w'}^{(d)}+\bm{q}_{w'}^{(r)}) x_{w'} + n_w\label{eq:y_n1}
\end{align}
\hrulefill
\vspace*{-.4cm}
\end{figure*}%
\noindent where the $\ell_0$-norm in \eqref{eq:dir_rel_decision} is used to determine whether any power is assigned to serve device $w$ using the direct or relayed link. 
At the D2D transmission during the second time slot, only strong devices $s\in\mathcal{K}^\text{str}$ have a non-zero transmission power. Let $\bm{p} = [p_1,\cdots,p_K]$ be the \emph{transmit power vector} across all users. The received signal at the strong device $s$ and the weak device $w$ in the first time slot can then be respectively formulated as \eqref{eq:y_m} and \eqref{eq:y_n1} (shown on the top of the next page), where $d_s$ denotes the weak device associated to $s$. The Co-NOMA device pairing is thus captured in $\bm{d}=[d_1,\cdots,d_{K}]$, where $d_k$ is the associated weak (strong) device if $k\in\mathcal{K}^\text{str}$ (if $k\in\mathcal{K}^\text{weak}$).

In the second time slot, the received signal at the weak device $w$ can be written as:
\begin{align}
y_{w}^{(2)} &= g_{d_w,w} \sqrt{p_{d_w}} x_w + \sum_{w'\in\mathcal{K}^\text{weak}\backslash\{w\}} g_{d_{w'},w} \sqrt{p_{d_{w'}}} x_{w'} + n_{w},\label{eq:y_n}
\end{align}
where $d_w$ denotes the strong device associated to $w$, 
and ${p_{d_w}}$ is the transmit power of device $d_w$. 

\section{Problem Formulation}\label{sec:prob_section}
This paper considers the problem of maximizing the sum log-rate across the considered hybrid CC/MEC network over the two Co-NOMA time slots presented above. In the sequel, we first present the achievable rates across the two slots, the associated joint computation and communication constraints, and the mathematical formulation of the considered problem.

\subsection{Achievable Rates}
In the first time slot, the strong devices receive their desired signals, as shown in \eqref{eq:y_m}. Based on \eqref{eq:y_m} (shown on top of the next page), we observe that each device receives its own intended signal $x_s$, its associated weak device's signal $x_{d_w}$, as well as other interfering signals and noise.
In order to define a unified expression for the interference terms, let $i$ denote either a strong or a weak device, where $i\in\mathcal{K}$. Further, let the set $\mathcal{X}_i$ be the set of strong devices which interfere with device $i$. Also, let $\mathcal{Y}_i$ be the set of interfering weak devices at device $i$. Then the interference plus noise experienced by device $i$, defined as $I_i(\mathcal{X}_i,\mathcal{Y}_i)$, can be written as:
\begin{align}
I_i(\mathcal{X}_i,\mathcal{Y}_i) &= \sigma^2 + \sum_{i'\in\mathcal{X}_i}\big|\bm{h}_i^H\bm{q}_{i'}^{(d)}\big|^2 \nonumber\\
&\hspace*{1.8cm}+ \sum_{j'\in\mathcal{Y}_i}\big|\bm{h}_i^H\big(\bm{q}_{j'}^{(d)}+\bm{q}_{j'}^{(r)}\big)\big|^2, \label{eq:I_i}
\end{align}
where $\sigma^2$ is the additive white Gaussian noise variance. Note that in \eqref{eq:I_i}, the strong and weak device interfering sets $\mathcal{X}_i$ and $\mathcal{Y}_i$ depend on the decoding order at device $i$.
Following the Co-NOMA \emph{relay} strategy, first the message intended to weak device $w$ is decoded at its associated strong device $d_w$ by treating the interference as noise. The signal to interference plus noise ratio (SINR) of weak device $w$'s message at strong device $d_w$ at the first time slot becomes
\begin{align}
\Gamma_{w}^{(r)} &= |\bm{h}_{d_w}^{H} {\bm{q}_{w}^{(r)}}|^2/I_{•d_w}(\mathcal{K}^\text{str},\mathcal{K}^\text{weak}\backslash\{w\}).\label{eq:SINR_n1}
\end{align}%
Afterwards, device $s$ is intended to decode its own signal. However, when the associated weak device is served directly, $s$ does not decode its weak device's message, leaving $d_w$ as part of the interference. Such considerations are captured in the following SINR expression for $s$ decoding its own message
\begin{align}
\Gamma_{s} &= \begin{cases}
	\frac{|\bm{h}_{s}^{H} \bm{q}_{s}^{(d)}|^2}{I_s(\mathcal{K}^\text{str}\backslash\{s\},\mathcal{K}^\text{weak}\backslash\{d_w\})}, &\text{if } x_{d_w} \text{ is decoded},\\[.1cm]
	\frac{|\bm{h}_{s}^{H} \bm{q}_{s}^{(d)}|^2}{I_s(\mathcal{K}^\text{str}\backslash\{s\},\mathcal{K}^\text{weak})}, &\text{otherwise}.
\end{cases}\label{eq:SINR_m}
\end{align}
Similar to the above discussion, the weak device $w$, at the first time slot, receives its own intended signal $x_w$, the interfering signals, and noise, as shown in \eqref{eq:y_n1} (shown on top of the next page). At weak device $w$, the SINR of the direct transmission in time slot 1 can then be written as:
\begin{align}
\Gamma_{w}^{(d)} &= |\bm{h}_{w}^{H} \bm{q}_{w}^{(d)}|^2/I_w(\mathcal{K}^\text{str},\mathcal{K}^\text{weak}\backslash\{w\}).\label{eq:SINR_nd}
\end{align}
Now, in the second time slot, the strong devices forward the weak device's signals. Thus, the SINR of weak device $w$, served by strong device $d_w$ becomes
\begin{align}
\Gamma_{w}^{(2)} &= \frac{|g_{d_w,w}|^2 {p_{d_w}}}{\sigma^2 + {\sideset{}{_{w'\in\mathcal{K}^\text{weak}\backslash\{w\}}}\sum |g_{d_{w'},w}|^2 {p_{d_{w'}}} }}.\label{eq:SINR_n2}
\end{align}

Define the achievable rates of weak device $w$'s message for the following two cases, for all $w\in\mathcal{K}^\text{weak}$: $(a)$ If the Co-NOMA \emph{relaying} strategy is selected, the rate of $w$'s message decoded at the corresponding strong device in time slot $1$ is denoted as $r_{w}^{(r)}$ (hop one) and the rate of $w$'s message in time slot $2$ is denoted as $r_{w}^{(2)}$ (hop two); $(b)$ If the direct transmission is selected, $w$'s rate is defined as $r_{w}^{(d)}$. The achievable rate of device $s$ is defined as $r_s$, for all $s\in\mathcal{K}^\text{str}$. The rate expressions of $r_{w}^{(r)}$, $r_s$, $r_{w}^{(d)}$, and $r_{w}^{(2)}$ can then be written as follows:
\begin{align}
r_{w}^{(r)} &\leq \nu W\log_2(1+\Gamma_{w}^{(r)}), &&\forall w \in \mathcal{K}^\text{weak},\label{eq:rmn1}\\
r_s &\leq \nu W\log_2(1+\Gamma_{s}), &&\forall s \in \mathcal{K}^\text{str},\label{eq:rm}\\
r_{w}^{(d)} &\leq \nu W\log_2(1+\Gamma_{w}^{(d)}), &&\forall w \in \mathcal{K}^\text{weak},\label{eq:rnn1}\\
r_{w}^{(2)} &\leq (1-\nu) W\log_2(1+\Gamma_{w}^{(2)}), &&\forall w \in \mathcal{K}^\text{weak},\label{eq:rmn2}
\end{align}%
where $W$ is the transmission bandwidth, and $\nu$ is the time-split factor. As an example, if $\nu=\frac{1}{2}$, the CC/EC network is active half of the time, while the other half is reserved for the strong-to-weak device links. 

Let $\bm{r}\in\mathbb{R}^K$ be the \emph{rate allocation vector} which characterizes the rates of all devices, with $\bm{r} = [r_1,\cdots,r_K]$. Following \cite{9259258}, a \emph{selection combining}-based mechanism is applied at the weak device. That is, the weak device $w$'s rate is determined by the maximum of the direct link and the strong device relayed link. The link with higher rate is selected for device $w$'s data reception, which is formulated as follows
\begin{equation}
{r_w \leq \text{max}\left(r_{w}^{(d)},\text{min}\left(r_{w}^{(r)},r_{w}^{(2)}\right)\right).}
\end{equation}
The above can also be represented using the \emph{direct}-link and \emph{relay}-link beamforming vectors explicitly as follows:
\begin{equation}
{r_w \leq \norm[\big]{\norm[\big]{\bm{{q}}_{w}^{(d)}}_{2}^{2}}_0 r_{w}^{(d)} + \norm[\big]{\norm[\big]{\bm{{q}}_{w}^{(r)}}_{2}^{2}}_0 \text{min}\left(r_{w}^{(r)},r_{w}^{(2)}\right).}\label{eq:rnsc}
\end{equation}
Note that in \eqref{eq:rnsc}, the link is selected using the $\ell_0$-norm.
With these strong and weak device rates at hand, we note that under the considered hybrid CC/MEC network, the CC part, i.e., the network part containing the core-devices, comes with additional constraints. That is, the cloud-to-BS (fronthaul) connection is limited by the finite capacity fronthaul constraint
\begin{equation}
\sideset{}{_{k\in\mathcal{K}}}\sum \big( \norm[\big]{\norm[\big]{\bm{{u}}_{b,k}^{(d)}}_{2}^{2}}_0+\norm[\big]{\norm[\big]{\bm{{u}}_{b,k}^{(r)}}_{2}^{2}}_0 \big) r_{k} \leq R_b^\text{max}, \quad\forall b \in \mathcal{B}, \label{eq:front}
\end{equation}
where the $\ell_0$-norm determines whether BS $b$ assigns power to serve device $k$ or not.

\subsection{Joint Computation and Communication}
One of this paper goals is to jointly determine the computation and communication resources in the considered hybrid cloud/MEC network. In what follows, we elaborate on the coupled computation and communication constraints, especially those related to the power and delay constraints.
\subsubsection{Power Consumption}
We consider three power consumption metrics, $p_k$, $P_{b}^\mathrm{cc}$, and $P_{e}^\mathrm{ec}$, which denote strong device $k$'s, BS $b$'s, and EC $e$'s power consumptions, respectively. The latter two are given as:
\begin{align}
&P_{b}^\mathrm{cc} = \sideset{}{_{k\in\mathcal{K}_c}}\sum \norm[\big]{\bm{{u}}_{b,k}}_{2}^{2}, \label{eq:pcc} \\
&P_{e}^\mathrm{ec} = \underbrace{\sideset{}{_{k\in\mathcal{K}_e}}\sum\norm[\big]{{\bm{v}}_{e,k}}_{2}^{2}}_{\text{Transmission}} + \underbrace{o_e \left(\sideset{}{_{k\in\mathcal{K}_e}}\sum f_{k} \right)^{\mu_e}}_{\text{Computation}} +\underbrace{\phantom{\Big(}\Phi_e\phantom{\Big)}}_{\text{Operation}}, \label{eq:pec}
\end{align}
where $o_e$ and $\mu_e$ are CPU model constants \cite{8434285}. On the one hand, the CC is assumed to be restricted in transmission consumption, as shown in \eqref{eq:pcc}. The ECs, on the other hand, are subject to three types of power consumptions. More specifically, $P_{e}^\mathrm{ec}$ in \eqref{eq:pec} comprises transmission, computation, and operational power \cite{8434285}, where the operational power $\Phi_e$ is fixed and accounts for mechanical, flight, and operational power. Finally, the power constraint at the strong device is given by:
\begin{equation}
p_s \leq P_s^\text{max}, \quad\forall s\in\mathcal{K}^\text{str},\label{eq:psmax}
\end{equation}
where $P_s^\text{max}$ is the maximum transmit power of device $s$.
\subsubsection{Delay Considerations}
Device $k$'s delay is given by
\begin{align}
&\Theta_{k} = \underbrace{{F_k}/{f_{k}}}_{\text{Computation Delay}} + \underbrace{\phantom{/}\Lambda_{k}\phantom{/}}_{\text{Fronthaul Delay}} + \underbrace{{D_k}/{r_k}}_{\text{Transmission Delay}},\label{eq:delaycc}
\end{align}
where $F_k$, $\Lambda_{k}$, and $D_k$ are the computation cycles required for $k$'s task, the fronthaul delay, and the resulting data size, respectively. In other words, \eqref{eq:delaycc} comprises of computation delay, i.e., the task processing time, fronthaul delay, i.e., the time-loss on the cloud-BS link, and transmission delay, i.e., the physical wireless data transfer time-loss. The middle term in \eqref{eq:delaycc}, i.e., the fronthaul delay, is only relevant for core-devices $k\in\mathcal{K}_c$, and is rather set to zero for edge-devices, i.e., $\Lambda_{k}=0$, $\forall k\in\mathcal{K}_e, \forall e\in\mathcal{E}$.
For all core-devices, the fronthaul delay can be explicitly written as
\begin{align}
\Lambda_k = \underset{b\in\mathcal{B}_k}{\text{max}}\Bigg\{ \frac{1}{R_b^\text{max}} \sum_{i\in\mathcal{K}_c}&  \bigg( \norm[\big]{\norm[\big]{\bm{{u}}_{b,i}^{(d)}}_{2}^{2}}_0+\norm[\big]{\norm[\big]{\bm{{u}}_{b,i}^{(r)}}_{2}^{2}}_0 \bigg)  D_{i} \Bigg\},\label{eq:Lambdak}
\end{align}
where $\mathcal{B}_k$ is the set of BSs serving device $k$. \eqref{eq:Lambdak} utilizes the $\ell_0$-norm to determine which devices' data are forwarded over the fronthaul link. The maximum tolerable delay per device is denoted by $T_k$.

\subsection{Mathematical Formulation}
By jointly optimizing the beamforming vectors, the transmit powers, the allocated rates, and the computation allocation, we aim at maximizing the sum log-rate of all network devices, so as to strike a trade-off for balancing throughput with fairness. The optimization problem considered in this paper can then be written as:
\begin{subequations}\label{eq:Opt1}
\begin{align}
	\underset{\bm{q},\bm{p},\bm{r},\bm{f}}{\text{max}}\quad &\sideset{}{_{k\in\mathcal{K}}}\sum\log(r_k)  \tag{\ref{eq:Opt1}} \\
	\text{s.t.} \quad\,\,\;\; & \eqref{eq:maxcompcap},\eqref{eq:dir_rel_decision}, \eqref{eq:rmn1}-\eqref{eq:rmn2},\eqref{eq:rnsc}, \eqref{eq:front}, \eqref{eq:psmax},\hspace*{-.2cm} \nonumber\\
	&P_{b}^\mathrm{cc}(\bm{q}) \leq P_{b}^{\text{max}}, &\forall b \in \mathcal{B}, \label{eq:powercc1}\\
	&P_{e}^\mathrm{ec}(\bm{q},\bm{f}) \leq P_{e}^{\text{max}},  &\forall e \in \mathcal{E}, \label{eq:powerec1}\\
	&{F_k}/{f_{k}} + {D_k}/{r_k} \leq T_k - \Lambda_{k},& \forall k \in \mathcal{K}, \label{eq:delay1}
\end{align}
\end{subequations}
with beamforming vectors $\bm{q}$, transmit powers $\bm{p}$, allocated rates $\bm{r}$, and computation allocation $\bm{f}$, where $\bm{f} = [f_1,\cdots,f_K]^T$. The solution set to problem \eqref{eq:Opt1} is limited by the linear maximum computation capacity constraint per platform \eqref{eq:maxcompcap}, the mixed-integer direct/relayed-link selection constraint for weak devices \eqref{eq:dir_rel_decision}, the non-convex, fractional achievable rate per strong device \eqref{eq:rm}, serving weak device in time slot $1$ via direct link \eqref{eq:rmn1} or relayed link \eqref{eq:rnn1}, and serving weak device in time slot $2$ \eqref{eq:rmn2}, the mixed-integer total weak device rate \eqref{eq:rnsc} constraint, the non-convex, mixed-integer fronthaul capacity constraint \eqref{eq:front}, the linear power constraint of each strong device \eqref{eq:psmax}, the convex power constraint of the CC \eqref{eq:powercc1} and of each EC \eqref{eq:powerec1}, and the non-convex, mixed-integer delay constraint \eqref{eq:delay1}. Note that the objective function in \eqref{eq:Opt1} ensures a trade-off between throughput and fairness. Due to the logarithm function, poorly-serviced devices are given priority when allocating resources, whereas high rate-devices continue to receive resources but at a diminishing rate.
Problem \eqref{eq:Opt1} constitutes a non-convex, mixed-integer optimization problem, which is difficult to solve in general. To tackle problem \eqref{eq:Opt1}, this paper next proposes a sequence of reformulation steps yielding a numerically practical algorithm for radio and computational resource management in the considered network.
\section{Proposed Algorithm}\label{sec:alg}
To solve the intricate optimization problem \eqref{eq:Opt1}, the paper uses a sequence of well-chosen techniques for tackling the mixed-integer and non-convex nature of the objective and constraints set. First, the weak device link selection \eqref{eq:dir_rel_decision} and rate expression \eqref{eq:rnsc}, fronthaul capacity \eqref{eq:front}, and fronthaul delay \eqref{eq:Lambdak} constraints are tackled using a heuristic, yet reasonable, weighted $\ell_1$-norm approximation, followed by an SCA and worst-case delay approximation method. Subsequently, FP, i.e., the quadratic transform in this case, is utilized to treat the achievable rate constraints \eqref{eq:rmn1}-\eqref{eq:rmn2} by means of tactfully updating specific auxiliary variables.
Finally, the paper presents a distributed implementation framework for the proposed algorithm, which only relies on a reasonable coordination among the CC and EC platforms.

\subsection{Direct/Relay-Link Selection}
Both constraints \eqref{eq:dir_rel_decision} and \eqref{eq:rnsc}, i.e., the direct/relay-link selection and weak device rate constraints, are functions of the $\ell_0$-norm of the beamforming vectors. Since the $\ell_0$-norm returns a binary value, such constraints become a major hurdle in solving the optimization problem \eqref{eq:Opt1} efficiently.
Using a weighted $\ell_1$-norm relaxation, e.g., see \cite{DaiY14}, we replace such $\ell_0$-norms using weighted $\ell_1$-norm formulations. That is, we approximate $\norm[\big]{\norm[\big]{\bm{{q}}_{w}^{(\Upsilon)}}_{2}^{2}}_0 = \beta_{w}^{(\Upsilon)}\norm[\big]{\norm[\big]{\bm{{q}}_{w}^{(\Upsilon)}}_{2}^{2}}_1$, where $\Upsilon\in\{d,r\}$ denotes either the direct or relay link.
The weights $\beta_{w}^{(\Upsilon)}$ are defined by 
\begin{align}
\beta_{w}^{(\Upsilon)} = \big(\delta+\norm[\big]{\bm{\tilde{q}}_{w}^{(\Upsilon)}}_{2}^{2}\big)^{-1}, \label{eq:updatebeta1}
\end{align}
where $\bm{\tilde{q}}_{w}^{(\Upsilon)}$ denotes the previous iteration's beamforming vector and $\delta>1$.
Applied to \eqref{eq:dir_rel_decision} and \eqref{eq:rnsc}, the respective weighted $\ell_1$-norm relaxed constraints become:
\begin{align}
r_w - \beta_{w}^{(d)}\norm[\big]{\norm[\big]{\bm{{q}}_{w}^{(d)}}_{2}^{2}}_1 r_{w}^{(d)}\hspace*{2.2cm}& \nonumber\\
- \beta_{w}^{(r)}\norm[\big]{\norm[\big]{\bm{{q}}_{w}^{(r)}}_{2}^{2}}_1 \text{min}\left(r_{w}^{(r)},r_{w}^{(2)}\right) &\leq 0,\label{eq:rnsc2}\\
\beta_{w}^{(d)}\norm[\big]{\norm[\big]{\bm{{q}}_{w}^{(d)}}_{2}^{2}}_1 + \beta_{w}^{(r)}\norm[\big]{\norm[\big]{\bm{{q}}_{w}^{(r)}}_{2}^{2}}_1 &\leq 1.\label{eq:dir_rel_decision2}
\end{align}
With such formulation at hand, the mixed-integer nature of the link selection constraints is relaxed into a smooth continuous function. The nature of \eqref{eq:dir_rel_decision2} has the following link selection interpretation: If a low power level is assigned to device $w$'s direct or relay link, the corresponding weight increases. As the addition of both $\ell_1$-norm relaxed formulations cannot exceed $1$, the link with low allocated power is eventually shut down, and the stronger link remains active.

To further tackle \eqref{eq:rnsc2}, we introduce the auxiliary variable $\bm{z}=\big[z_{1}^{(d)},\cdots,z_{K}^{(d)},z_{1}^{(r)},\cdots,z_{K}^{(r)}\big]^T$, where only the values at the weak users $k\in\mathcal{K}^\text{weak}$ are non-zero. Further, we extend the rate variable $\bm{r}$ by the auxiliary variable $r_{w}^\text{aux}$, $\forall w\in\mathcal{K}^\text{weak}$, which captures the minimum achievable rate of device $w$'s message on the relay link. One can, therefore, formulate the weak device auxiliary constraints for all $w \in \mathcal{K}^\text{weak}$ as:
\begin{align}
\beta_{w}^{(d)}\norm[\big]{\norm[\big]{\bm{{q}}_{w}^{(d)}}_{2}^{2}}_1 &\leq z_{w}^{(d)},\label{eq:betand}\\
\beta_{w}^{(r)}\norm[\big]{\norm[\big]{\bm{{q}}_{w}^{(r)}}_{2}^{2}}_1 &\leq z_{w}^{(r)}.\label{eq:betanr}
\end{align}
The achievable rate of device $w$'s message on the relay link is given as
\begin{align}
\text{min}\left(r_{w}^{(r)},r_{w}^{(2)}\right) &\geq r_{w}^\text{aux}.\label{eq:rmn12}
\end{align}
Hence, constraints \eqref{eq:rnsc2} and \eqref{eq:dir_rel_decision2} can be rewritten as:
\begin{align}
r_w - z_{w}^{(d)} r_{w}^{(d)} - z_{w}^{(r)} r_{w}^\text{aux}&\leq 0,\label{eq:rnsc3}\\
z_{w}^{(d)} + z_{w}^{(r)} &\leq 1.\label{eq:dir_rel_decision3}
\end{align}
While constraints \eqref{eq:betand}, \eqref{eq:betanr}, \eqref{eq:rmn12}, and \eqref{eq:dir_rel_decision3} appear in convex form, \eqref{eq:rnsc3} is of bilinear nature and couples the selection-auxiliary variable and the rates. To tackle the issue of bilinearity in \eqref{eq:rnsc3} and convexifying the corresponding constraint using SCA, consider the following lemma.
\begin{lemma}\label{lma1}
The convex SCA-applied reformulation of \eqref{eq:rnsc3} is given by the following constraint for all weak devices $w\in\mathcal{K}^\text{weak}$
\begin{align}
	&4 r_w + ({z}_{w}^{(d)} - {r}_{w}^{(d)})^2 + (z_{w}^{(r)} - r_{w}^\text{aux})^2 \label{eq:rnsc5}\\
	&- (\tilde{z}_{w}^{(d)} + \tilde{r}_{w}^{(d)})^2 - (\tilde{z}_{w}^{(r)} + \tilde{r}_{w}^\text{aux})^2 \nonumber\\
	&- 2 (\tilde{z}_{w}^{(d)} + \tilde{r}_{w}^{(d)})\big[(z_w^{(d)}-\tilde{z}_w^{(d)})+({r}_{w}^{(d)}-\tilde{r}_{w}^{(d)}) \big]\nonumber\\
	&- 2 (\tilde{z}_{w}^{(r)} + \tilde{r}_{w}^\text{aux}) \big[({z}_{w}^{(r)}-\tilde{z}_{w}^{(r)})+({r}_{w}^\text{aux}-\tilde{r}_{w}^\text{aux})\big]\leq 0,\nonumber
\end{align}
where $\tilde{z}_{w}^{(d)}$, $\tilde{z}_{w}^{(r)}$, $\tilde{r}_{w}^{(d)}$, and $\tilde{r}_{w}^\text{aux}$ are optimized variables from the previous iteration.
\end{lemma}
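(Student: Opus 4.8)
The plan is to convexify the two bilinear products in \eqref{eq:rnsc3}, namely $z_{w}^{(d)} r_{w}^{(d)}$ and $z_{w}^{(r)} r_{w}^\text{aux}$, by first rewriting each as a difference of squares and then linearizing the concave piece via SCA. I would begin from the elementary identity $xy = \tfrac{1}{4}\big[(x+y)^2 - (x-y)^2\big]$ and apply it to both products. Substituting into \eqref{eq:rnsc3} and multiplying through by $4$ recasts the constraint exactly as $4 r_w + (z_{w}^{(d)} - r_{w}^{(d)})^2 + (z_{w}^{(r)} - r_{w}^\text{aux})^2 - (z_{w}^{(d)} + r_{w}^{(d)})^2 - (z_{w}^{(r)} + r_{w}^\text{aux})^2 \leq 0$, which is algebraically equivalent to the original bilinear form but still non-convex.

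The key structural observation is that this reformulated left-hand side is a difference-of-convex function: the squared-\emph{difference} terms $(z-r)^2$ are jointly convex and hence harmless for a $\leq$ constraint, whereas the squared-\emph{sum} terms enter with a negative sign and are therefore concave. To convexify, I would replace each concave term $-(x+y)^2$ by its first-order Taylor expansion about the previous iterate $(\tilde{x},\tilde{y})$, namely $-(\tilde{x}+\tilde{y})^2 - 2(\tilde{x}+\tilde{y})\big[(x-\tilde{x}) + (y-\tilde{y})\big]$. Carrying this out with $(x,y) = (z_{w}^{(d)}, r_{w}^{(d)})$ and $(x,y) = (z_{w}^{(r)}, r_{w}^\text{aux})$ respectively produces exactly \eqref{eq:rnsc5}, which is now convex, being a sum of convex quadratics and affine terms.

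The step requiring the most care — and the main conceptual point rather than a computational obstacle — is verifying that \eqref{eq:rnsc5} is a valid \emph{inner} approximation of \eqref{eq:rnsc3}, so that feasibility for the surrogate implies feasibility for the original. This follows because a concave function lies below its tangent plane, so each linearization is a global upper bound; substituting it can only increase the left-hand side, whence the surrogate expression dominates the exact difference-of-convex expression pointwise. Consequently, if the left-hand side of \eqref{eq:rnsc5} is $\leq 0$, then so is the exact expression, and therefore \eqref{eq:rnsc3} holds as well. I would also remark that the bound is tight at the expansion point $(\tilde{z},\tilde{r})$, where the Taylor term matches the concave term exactly; this guarantees that the current operating point is never cut off and underlies the standard monotone-improvement and convergence behavior of the SCA scheme.
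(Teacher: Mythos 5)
Your proposal is correct and follows essentially the same route as the paper's own proof: the difference-of-squares identity $xy=\tfrac{1}{4}[(x+y)^2-(x-y)^2]$ applied to both bilinear terms, followed by first-order Taylor linearization of the concave $-(x+y)^2$ pieces about the previous iterate, together with the upper-bound and tightness-at-the-operating-point observations that justify the SCA surrogate. The only detail the paper adds beyond your argument is an explicit check that the gradients of the surrogate and the original constraint coincide at the expansion point, which it needs to invoke the KKT-convergence result it cites; this is a routine computation and does not change the substance of your proof.
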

\begin{proof}
The proof of the above lemma can be found in Appendix~\ref{app1}.
\end{proof}
Under all previously described reformulations, the direct/relay-link selection constraints \eqref{eq:betand}-\eqref{eq:rmn12}, \eqref{eq:dir_rel_decision3}, and \eqref{eq:rnsc5} now appear in convex forms, which are eventually used in deriving the overall proposed algorithm.
\subsection{Fronthaul Capacity and Delay}
Firstly, the fronthaul capacity constraint \eqref{eq:front} is tackled using an SCA approach, i.e., using similar steps to the previous lemma. 
More specifically, we let $z_{b,k}$, $\forall b \in \mathcal{B}$, $\forall k\in\mathcal{K}$, be an auxiliary variable. Using weighted $\ell_1$-norm relaxation, and applying SCA, we obtain the following set of constraints for all $b \in \mathcal{B}$ and $k\in\mathcal{K}$:
\begin{align}
\beta_{b,k}\big(\norm[\big]{\norm[\big]{\bm{{u}}_{b,k}^{(d)}}_{2}^{2}}_1+\norm[\big]{\norm[\big]{\bm{{u}}_{b,k}^{(r)}}_{2}^{2}}_1\big) &\leq z_{b,k}, \label{eq:zbk}\\
\sum_{k\in\mathcal{K}}\big( \left(z_{b,k}+r_{k}\right)^2 - 2 \left(\tilde{z}_{b,k}-\tilde{r}_{k}\right)\left(z_{b,k}-r_{k}\right)& \nonumber\\[-.35cm]
\qquad\qquad\qquad + \left(\tilde{z}_{b,k}-\tilde{r}_{k}\right)^2\big) &\leq 4 R_b^\text{max}, \label{eq:front3}
\end{align}
with weights 
\begin{align}
\beta_{b,k} = \big(\delta + \norm[\big]{\bm{\tilde{u}}_{b,k}^{(d)}}_{2}^{2} + \norm[\big]{\bm{\tilde{u}}_{b,k}^{(r)}}_{2}^{2}\big)^{-1},\label{eq:updatebeta2}
\end{align}
and where $\bm{\tilde{u}}_{b,k}^{(d)}$, $\bm{\tilde{u}}_{b,k}^{(r)}$, and $\tilde{z}_{b,k}$ are the beamforming vectors and auxiliary variables from the previous iteration. With a similar reasoning to the link selection constraint, using the weighted $\ell_1$-norm approach results in BS-to-device associations being deactivated during the algorithm execution at low levels of associated powers. 

Secondly, the fronthaul delay \eqref{eq:Lambdak} is tackled based on the previous iteration's BS-to-device clustering. That is, knowing that a link can only be deactivated from one iteration to another (not activated), the worst-case fronthaul delay of a device $k$ is, therefore, given by:
\begin{align}
\Lambda_k = \underset{b\in\mathcal{B}_k}{\text{max}}\Bigg\{ \frac{1}{R_b^\text{max}} \sum_{i\in\mathcal{K}_c} \bigg( \norm[\big]{\norm[\big]{\bm{\tilde{u}}_{b,i}^{(d)}}_{2}^{2}}_0 +\norm[\big]{\norm[\big]{\bm{\tilde{u}}_{b,i}^{(r)}}_{2}^{2}}_0 \bigg)  D_{i} \Bigg\}.\label{eq:Lambdak2}
\end{align}
Formulation \eqref{eq:Lambdak2}, in fact, only depends on the previous iteration's optimization variables, and thus guarantees an upper bound on the fronthaul delay at each iteration. Now that this step is solved, constraint \eqref{eq:zbk}, constraint \eqref{eq:front3}, and the worst-case fronthaul delay \eqref{eq:Lambdak2} are of convex form. At this stage, the remaining non-convexity of the original problem \eqref{eq:Opt1} stems from the achievable rate constraints \eqref{eq:rmn1}-\eqref{eq:rmn2}.
\subsection{Achievable Rate Constraints}
The remaining hurdles against solving the original optimization problem \eqref{eq:Opt1} are the highly-coupled, fractional, and non-convex achievable rate formulations \eqref{eq:rmn1}-\eqref{eq:rmn2}. We next tackle such constraints using FP, by introducing the auxiliary variables $\gamma_{w}^{(r)}$, $\gamma_{s}$, $\gamma_{w}^{(d)}$, and $\gamma_{w}^{(2)}$, $\forall s\in\mathcal{K}^\text{str}$ and $ w\in\mathcal{K}^\text{weak}$. We then obtain the following constraints:
\begin{align}
r_{w}^{(r)} &\leq \nu W\log_2(1+\gamma_{w}^{(r)}), \label{eq:rmn1_convex}\\
r_{s} &\leq \nu W\log_2(1+\gamma_{s}),\label{eq:rm_convex}\\
r_{w}^{(d)} &\leq \nu W\log_2(1+\gamma_{w}^{(d)}),\label{eq:rnn1_convex}\\
r_{w}^{(2)} &\leq (1-\nu) W\log_2(1+\gamma_{w}^{(2)}), \label{eq:rmn2_convex}
\end{align}\vspace*{-.7cm}
\begin{align}
&\gamma_{w}^{(r)} \leq \Gamma_{w}^{(r)};\,\gamma_{s} \leq \Gamma_{s};\,\gamma_{w}^{(d)} \leq \Gamma_{w}^{(d)};\,\gamma_{w}^{(2)} \leq \Gamma_{w}^{(2)}. \label{eq:SINR_all} 
\end{align}%
Note that the above constraints within \eqref{eq:SINR_all} have a non-linear form. Their fractional type, however, allows using FP techniques, namely the quadratic transform, so as to reach an easier solution to the original optimization problem.

\begin{lemma}\label{lma3}
Applying the quadratic transform to \eqref{eq:SINR_all}, we obtain the convex functions $g_{w}^{(r)}(\bm{q})$, $g_{w}^{(d)}(\bm{q})$, $g_{w}^{(2)}(\bm{p})$, and $g_{s}(\bm{q})$, which can be mathematically written as follows:
\begin{align}
	g_{w}^{(r)}(\bm{q}) &= \gamma_{w}^{(r)} - 2 \text{Re}\big\{ (a_{w}^{(r)})^H \bm{h}_{d_w}^{H} \bm{q}_{w}^{(r)} \big\} \nonumber\\
	&\hspace*{2cm} + |a_{w}^{(r)}|^2 I_{d_w}(\mathcal{K}^\text{str},\mathcal{K}^\text{weak}\backslash\{w\}),\label{eq:gsw1}\\
	g_{w}^{(d)}(\bm{q}) &= \gamma_{w}^{(d)} - 2 \text{Re}\big\{ (a_{w}^{(d)})^H \bm{h}_{w}^{H} \bm{q}_{w}^{(d)} \big\} \nonumber\\
	&\hspace*{2cm} + |a_{w}^{(d)}|^2 I_w(\mathcal{K}^\text{str},\mathcal{K}^\text{weak}\backslash\{w\}),\label{eq:gww1}\\
	g_{w}^{(2)}(\bm{p}) &= \gamma_{w}^{(2)} - 2 a_{w}^{(2)} \sqrt{|g_{d_w,w}|^2 {p_{d_w}}} \nonumber\\
	&\hspace*{-.2cm} + (a_{w}^{(2)})^2 \Bigg[ \sigma^2 + {\sideset{}{_{w'\in\mathcal{K}^\text{weak}\backslash\{w\}}}\sum |g_{d_{w'},w}|^2 {p_{d_{w'}}} } \Bigg],\label{eq:gww2s}
\end{align}
\begin{align}
	&g_{s}(\bm{q}) = \nonumber\\
	&\begin{cases}
		\gamma_{s} \hspace*{-.1cm}- 2 \text{Re}\big\{\hspace*{-.05cm}a_{s}^H \bm{h}_{s}^{H} \bm{q}_{s}^{(d)}\hspace*{-.05cm}\big\} + |a_{s}|^2 I_s(\mathcal{K}^\text{str}\backslash\{s\},\mathcal{K}^\text{weak}\backslash\{d_s\}), \\
		\hspace*{5.8cm}\text{if } x_{d_s} \text{ is decoded},\\
		\gamma_{s} \hspace*{-.1cm}- 2 \text{Re}\big\{\hspace*{-.05cm}a_{s}^H \bm{h}_{s}^{H} \bm{q}_{s}^{(d)}\hspace*{-.05cm}\big\} + |a_{s}|^2  I_s(\mathcal{K}^\text{str}\backslash\{s\},\mathcal{K}^\text{weak}), \\
		\hspace*{5.8cm}\text{otherwise}.
	\end{cases}\label{eq:gss}
\end{align}
The corresponding optimal auxiliary variables, namely, $a_{w}^{(r)}$, $a_{w}^{(d)}$, $a_{w}^{(2)}$, and $a_{s}$, are combined in the vector $\bm{a}$ and updated iteratively using the following update equations:
{\begin{align}
		(a_{w}^{(r)})^* &= 2 \text{Re}\big\{\bm{h}_{d_w}^{H} \bm{\tilde{q}}_{w}^{(r)} \big\} /  \tilde{I}_{d_w}(\mathcal{K}^\text{str},\mathcal{K}^\text{weak}\backslash\{w\}), \label{eq:amn1}\\
		(a_{w}^{(d)})^* &= 2 \text{Re}\big\{\bm{h}_{w}^{H} \bm{\tilde{q}}_{w}^{(d)} \big\} / \tilde{I}_w(\mathcal{K}^\text{str},\mathcal{K}^\text{weak}\backslash\{w\}), \label{eq:ann1}\\
		(a_{w}^{(2)})^* &= \frac{\sqrt{|g_{d_w,w}|^2 {\tilde{p}_{d_w}}}}{\sigma^2 + {\sideset{}{_{w'\in\mathcal{K}^\text{weak}\backslash\{w\}}}\sum |g_{d_{w'},w}|^2 {\tilde{p}_{d_{w'}}} }} \label{eq:ann2},\\
		a_{s}^* &= \begin{cases}
			\frac{2 \text{Re}\{\bm{h}_{s}^{H} \bm{\tilde{q}}_{s}^{(d)}\}}{\tilde{I}_s(\mathcal{K}^\text{str}\backslash\{s\},\mathcal{K}^\text{weak}\backslash\{d_s\})}, &\text{if } x_{d_s} \text{ is decoded},\\[.1cm]
			\frac{2 \text{Re}\{\bm{h}_{s}^{H} \bm{\tilde{q}}_{s}^{(d)}\} }{\tilde{I}_s(\mathcal{K}^\text{str}\backslash\{s\},\mathcal{K}^\text{weak})}, &\text{otherwise}.
		\end{cases}\label{eq:amm}
\end{align}}%
Note that in \eqref{eq:amn1}, \eqref{eq:ann1}, and \eqref{eq:amm}, the function $\tilde{I}_i(\mathcal{X},\mathcal{Y})$ is adopted from \eqref{eq:I_i}, where the beamforming vectors are replaced by the previous iteration's optimal solution, i.e., $\bm{\tilde{q}}$.
\end{lemma}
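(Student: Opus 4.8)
The plan is to treat the SINR constraints \eqref{eq:SINR_all} with the quadratic transform of fractional programming (in the spirit of Shen--Yu), exploiting that every SINR in \eqref{eq:SINR_n1}, \eqref{eq:SINR_m}, \eqref{eq:SINR_nd}, and \eqref{eq:SINR_n2} is a ratio of the form $|A|^2/B$ with a strictly positive denominator $B$. The key identity is
\[
\frac{|A|^2}{B} \;=\; \max_{a}\;\Big(2\,\mathrm{Re}\{a^* A\} - |a|^2 B\Big),
\]
whose unique maximizer is $a = A/B$. Consequently each constraint $\gamma \le |A|^2/B$ in \eqref{eq:SINR_all} is equivalent to the existence of an auxiliary variable $a$ satisfying $\gamma - 2\,\mathrm{Re}\{a^* A\} + |a|^2 B \le 0$, which is exactly the requirement $g(\cdot)\le 0$ with the $g$-functions displayed in \eqref{eq:gsw1}--\eqref{eq:gss}. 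This reformulation decouples numerator and denominator and is then handled by alternating between the original variables and $\bm{a}$.

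First I would match the pieces SINR by SINR. For the three beamforming ratios $\Gamma_{w}^{(r)}$, $\Gamma_{w}^{(d)}$, and $\Gamma_{s}$, the ``numerator root'' is the desired linear term $A = \bm{h}^{H}\bm{q}$, while $B$ is the interference-plus-noise $I_i(\mathcal{X}_i,\mathcal{Y}_i)$ from \eqref{eq:I_i}; the two branches of \eqref{eq:gss} correspond directly to the two interference sets already distinguished in \eqref{eq:SINR_m}. For the power-domain ratio $\Gamma_{w}^{(2)}$ the numerator is linear rather than a squared magnitude, so I would reparameterize it as a square, $|g_{d_w,w}|^2 p_{d_w} = \big(|g_{d_w,w}|\sqrt{p_{d_w}}\big)^2$, and take a real auxiliary $a_{w}^{(2)}$, which produces \eqref{eq:gww2s}.

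Next I would establish convexity of each $g$ in its optimization argument for fixed $\bm{a}$. In the beamforming cases, $\gamma$ enters affinely, the cross term $-2\,\mathrm{Re}\{a^* \bm{h}^{H}\bm{q}\}$ is affine in $\bm{q}$, and $|a|^2 I_i(\bm{q})$ is a nonnegative multiple of a sum of squared magnitudes of affine maps, hence a convex quadratic; therefore $g_{w}^{(r)}, g_{w}^{(d)}, g_{s}$ are jointly convex in $(\gamma,\bm{q})$. For $g_{w}^{(2)}$, the denominator term $(a_{w}^{(2)})^2 B$ is affine in $\bm{p}$, while $-2a_{w}^{(2)}|g_{d_w,w}|\sqrt{p_{d_w}}$ is convex in $p_{d_w}$ because $\sqrt{\cdot}$ is concave and $a_{w}^{(2)}\ge 0$; thus $g_{w}^{(2)}$ is convex in $\bm{p}$.

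Finally I would recover the updates by imposing the first-order optimality condition on $\bm{a}$: minimizing each $g$ over its auxiliary variable (equivalently, maximizing the ratio surrogate) yields the closed form $a = A/B$, and evaluating it at the previous iterate $\tilde{\bm{q}}$ (resp.\ $\tilde{\bm{p}}$), with $I_i$ replaced by $\tilde{I}_i$ as in \eqref{eq:I_i}, gives \eqref{eq:amn1}--\eqref{eq:amm}. The main obstacle I anticipate lies with the power-domain ratio: unlike the beamforming terms it is not natively of the form $|A|^2/B$, so the square-root reparameterization is indispensable, and one must separately verify both that the $-\sqrt{p_{d_w}}$ contribution is convex and that the update \eqref{eq:ann2} returns a nonnegative $a_{w}^{(2)}$, so that convexity is not lost; the remaining arguments are routine once the ratio structure of each SINR is exposed.
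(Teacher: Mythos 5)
Your proposal is correct and follows essentially the same route as the paper's Appendix~\ref{app2}: apply the quadratic transform to each ratio in \eqref{eq:SINR_all}, identify the numerator root $A$ and denominator $B$ for each SINR (including the square-root reparameterization of $\Gamma_w^{(2)}$), and recover the auxiliary updates from the first-order optimality condition in $\bm{a}$. If anything, your write-up is more careful than the paper's --- you explicitly check convexity of each surrogate (including the $-\sqrt{p_{d_w}}$ term and the nonnegativity of $a_w^{(2)}$), and your maximizer $a=A/B$ is the standard complex-valued form, whereas the update \eqref{eq:amn1} as stated in the lemma carries a $2\,\mathrm{Re}\{\cdot\}$ in the numerator that does not even match the factor-free expression derived at the end of the paper's own appendix.
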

\begin{proof}
The detailed steps of the proof can be found in Appendix~\ref{app2}.
\end{proof}
Using the above lemma, the achievable rate constraints \eqref{eq:rmn1_convex}-\eqref{eq:rmn2_convex} and the auxiliary SINR-related constraints \eqref{eq:gsw1}-\eqref{eq:gss} are, in fact, now in convex forms. All of the above make problem \eqref{eq:Opt1}'s intricacies tackled, as they result in a convex optimization problem, the solution of which is computed in an iterative fashion by updating weights and auxiliary variables after each iteration, as illustrated next.

\subsection{Centralized Resource Management (CRM) Protocol}\label{ssec:prob}
\subsubsection{Algorithmic Description}
With all of the above considerations at hand, we can reformulate the original optimization problem as follows:
\begin{subequations}\label{eq:Opt2}
\begin{align}
	\underset{\bm{q},\bm{p},\bm{r},\bm{f},\bm{z},\boldsymbol{\gamma}}{\text{max}}\; &\sideset{}{_{k\in\mathcal{K}}}\sum\log(r_k)  \tag{\ref{eq:Opt2}} \\
	\text{s.t.} \quad\,\;\; & \eqref{eq:maxcompcap}, \eqref{eq:psmax}, \eqref{eq:powercc1}-\eqref{eq:delay1}, \eqref{eq:betand}-\eqref{eq:rmn12}, \eqref{eq:dir_rel_decision3}, \hspace*{-2.99cm}\nonumber\\
	& \eqref{eq:rnsc5}-\eqref{eq:front3}, \eqref{eq:rmn1_convex}-\eqref{eq:rmn2_convex},\hspace*{0.1cm}\nonumber\\
	&g_{w}^{(r)}(\bm{q}) \leq 0, &\forall w\in\mathcal{K}^\text{weak},\label{eq:gsw1_}\\
	&g_{w}^{(d)}(\bm{q}) \leq 0, &\forall w\in\mathcal{K}^\text{weak},\label{eq:gww1_}\\
	&g_{w}^{(2)}(\bm{p}) \leq 0, &\forall w\in\mathcal{K}^\text{weak},\label{eq:gww2s_}\\
	&g_{s}(\bm{q}) \leq 0, &\forall s\in\mathcal{K}^\text{str}.\label{eq:gss_}
\end{align}
\end{subequations}
\begin{algorithm}[t]
\caption{Centralized Resource Management}
\begin{algorithmic}[1]
	\STATE Initialize $\bm{q}$ and $\bm{p}$ to feasible values\\ \vspace*{-.05cm}
	\textbf{Repeat:} until convergence \vspace*{-.05cm}
	\STATE Update weights $\boldsymbol{\beta}$ using \eqref{eq:updatebeta1}, \eqref{eq:updatebeta2} \vspace*{-.05cm}
	\STATE Update $\bm{\tilde{q}}$, $\bm{\tilde{r}}$, $\bm{\tilde{z}}$, and $\bm{\tilde{p}}$ using the previous solution \vspace*{-.05cm}
	\STATE Update auxiliary variables $\bm{a}$ using \eqref{eq:amn1}-\eqref{eq:amm} \vspace*{-.05cm}
	\STATE Solve convex optimization problem \eqref{eq:Opt2}\vspace*{-.05cm}
	\STATE \textbf{End}  \vspace*{-.05cm}
\end{algorithmic}
\label{alg:crm}
\end{algorithm}%
Problem \eqref{eq:Opt2} jointly optimizes the beamforming vectors $\bm{q}$, the strong device transmit powers $\bm{p}$, the allocated rates $\bm{r}$, the computation allocations $\bm{f}$, and the auxiliary variables $\bm{z}$ and $\boldsymbol{\gamma}$. As described earlier, problem \eqref{eq:Opt2} is subject to maximum computation capacity \eqref{eq:maxcompcap}, strong device transmit power \eqref{eq:psmax}, BS transmit power \eqref{eq:powercc1}, EC power \eqref{eq:powerec1}, and per-device delay \eqref{eq:delay1} constraints. Further, the direct/relay-link selection constraints \eqref{eq:betand}-\eqref{eq:rmn12}, \eqref{eq:dir_rel_decision3}, \eqref{eq:rnsc5}, the fronthaul capacity constraints \eqref{eq:zbk}, \eqref{eq:front3}, the achievable rate constraints \eqref{eq:rmn1_convex}-\eqref{eq:rmn2_convex}, and the quadratic transform SINR constraints \eqref{eq:gsw1_}-\eqref{eq:gss_} define the convex constraint set of problem \eqref{eq:Opt2}.

Hence, \eqref{eq:Opt2} is a convex problem, as it maximizes a concave objective function subject to convex constraints. It can, therefore, be solved efficiently, e.g., using CVX \cite{cvx}. The adopted centralized resource management (CRM) algorithm then runs as follows. Initially, the beamforming vectors $\bm{q}$ and power allocations $\bm{p}$ are initialized to feasible values. In this work, $\bm{q}$ is initialized randomly, and $\bm{p}$ is initially set to the maximum nominal value. The following steps are then repeated until convergence, i.e., until the sum-logarithmic rate increase between iterations falls below a certain threshold $\omega$. The CRM updates the weights $\beta_w^{(d)}$, $\beta_w^{(r)}$, and $\beta_{b,k}$, the feasible fixed values $\bm{\tilde{q}}$, $\bm{\tilde{r}}$, $\bm{\tilde{z}}$, and $\bm{\tilde{p}}$ according to the last iteration's solution, and the optimal auxiliary variables $\bm{a}$ according to the update equations \eqref{eq:amn1}-\eqref{eq:amm}. Note that in the initial iteration, $\bm{\tilde{q}}$ and $\bm{\tilde{p}}$ are taken from the initialization, $\bm{r}$ is set by replacing the inequalities by equalities in \eqref{eq:rmn1}-\eqref{eq:rmn2} and \eqref{eq:rmn12}, and $\bm{z}$ by an all-ones vector. Subsequently, the CRM solves the convex optimization problem \eqref{eq:Opt2} using CVX \cite{cvx}. The above steps are summarized in the description of Algorithm~\ref{alg:crm}.
\subsubsection{Computational Complexity}\label{ssec:compcompcrm}
Algorithm~\ref{alg:crm}'s computational complexity is a function of the following two factors: (i) The complexity of problem \eqref{eq:Opt2} and the used solver; (ii) The convergence rate of the algorithm, i.e., the maximum number of iterations. The convex problem \eqref{eq:Opt2}, solvable using the interior-point method, has a total number of variables $\eta = 6K+3K^\text{weak}+K_cB+BL_c(K_c+K_c^\text{weak})+\sideset{}{_{e\in\mathcal{E}}}\sum L_e(K_e+K^\text{weak}_e)$, where $K^\text{weak}=|\mathcal{K}^\text{weak}|$, $K_c^\text{weak} = |\{\mathcal{K}_c\cap \mathcal{K}^\text{weak}\}|$ and $K_e^\text{weak} = |\{\mathcal{K}_e\cap \mathcal{K}^\text{weak}\}|$. By further letting the worst-case number of iterations be $\Omega^\text{max}$, the CRM's upper-bound computational complexity is $\mathcal{O}(\Omega^\text{max}(\eta)^{3.5})$ \cite{Lobo1998ApplicationsOS}.

Such a centralized approach, however, may require a signal level coordination among the CC and ECs, which is challenging in a reasonably sized network. Henceforth, the next subsection describes how the proposed approach can instead be implemented in a distributed fashion across the CC and multiple ECs, which makes the proposed algorithm amenable for practical implementation.
\subsection{Distributed Resource Management (DRM) Protocol}\label{ssec:DRM_prob}
\subsubsection{Algorithmic Description}
As discussed above, CRM implementation requires signal-level coordination among the computing platforms (i.e., the CC and ECs). Additionally, large-scale hybrid CC/MEC networks host a huge number of devices, BSs, and UAVs, which adds to the computational burden of to the CRM algorithm. Therefore, this paper proposes an alternative, feasible, distributed resource management (DRM) protocol, which can be implemented in a distributed fashion across the CC and the ECs. Such distributed operation relaxes the need of the computing platforms to be fully connected, and rather relies on reasonable message exchange among the computing platforms. More specifically, considering the convex optimization problem \eqref{eq:Opt2}, we note that the fronthaul-related constraints \eqref{eq:zbk}, \eqref{eq:front3} only affect the CC network part. Link selection \eqref{eq:betand}-\eqref{eq:rmn12}, \eqref{eq:dir_rel_decision3}, \eqref{eq:rnsc5}, computation aspects \eqref{eq:maxcompcap}, power constraints \eqref{eq:psmax}, \eqref{eq:powercc1}, \eqref{eq:powerec1}, achievable rates \eqref{eq:rmn1_convex}-\eqref{eq:rmn2_convex}, and delay formulations \eqref{eq:delay1} are also dependent on each computing platform, respectively. Constraints \eqref{eq:gsw1_}-\eqref{eq:gss_}, however, are not on a per-computing platform basis, as they depend on the following terms:
\begin{align}
&\sideset{}{_{s'\in\mathcal{K}^\text{str}}}\sum |\bm{h}_{s}^{H} \bm{q}_{s'}^{(d)}|^2; \sideset{}{_{w'\in\mathcal{K}^\text{weak}}}\sum |\bm{h}_{s}^{H} (\bm{q}_{w'}^{(d)}+\bm{q}_{w'}^{(r)})|^2;\label{eq:centr_term1}\\
&\sideset{}{_{s\in\mathcal{K}^\text{str}}}\sum |\bm{h}_{w}^{H} \bm{q}_{s}^{(d)}|^2; \sideset{}{_{w'\in\mathcal{K}^\text{weak}}}\sum |\bm{h}_{w}^{H} (\bm{q}_{w'}^{(d)}+\bm{q}_{w'}^{(r)})|^2;\label{eq:centr_term2}\\
&\sideset{}{_{w'\in\mathcal{K}^\text{weak}\backslash\{w\}}}\sum |g_{d_{w'},w}|^2 {p_{d_{w'}}}.\label{eq:centr_term3}
\end{align}
The above terms are in fact used in updating the auxiliary variables $\big(a_{w}^{(r)}\big)^*$, $\big(a_{w}^{(d)}\big)^*$, $\big(a_{w}^{(2)}\big)^*$, and $a_{s}^*$, and require knowledge about the variables of the whole network. We next present how to decouple the decisions among the computing platforms. First note that the CC only has knowledge of
\begin{align}
&|\bm{h}_{s}^{H} \bm{q}_{i}|^2, &&\forall s\in\{\mathcal{K}_c\cap\mathcal{K}^\text{str}\}, \forall i\in\mathcal{K}_c,\\
&|\bm{h}_{w}^{H} \bm{q}_{i}|^2, &&\forall w\in\{\mathcal{K}_c\cap\mathcal{K}^\text{weak}\}, \forall i\in\mathcal{K}_c,\\
&|g_{s',w'}|^2 {p_{s'}}, \hspace*{-.2cm}&&\forall s'\in\{\mathcal{K}_c\cap\mathcal{K}^\text{str}\}, \forall w'\in\{\mathcal{K}_c\cap\mathcal{K}^\text{weak}\}.
\end{align}
Each EC $e$, on the other hand, has access to
\begin{align}
&|\bm{h}_{s}^{H} \bm{q}_{i}|^2, &&\forall s\in\{\mathcal{K}_e\cap\mathcal{K}^\text{str}\}, \forall i\in\mathcal{K}_e,\\
&|\bm{h}_{w}^{H} \bm{q}_{i}|^2, &&\forall w\in\{\mathcal{K}_e\cap\mathcal{K}^\text{weak}\}, \forall i\in\mathcal{K}_e,\\
&|g_{s',w'}|^2 {p_{s'}}, \hspace*{-.2cm}&&\forall s'\in\{\mathcal{K}_e\cap\mathcal{K}^\text{str}\}, \forall w'\in\{\mathcal{K}_e\cap\mathcal{K}^\text{weak}\}.
\end{align}
To be able to compute \eqref{eq:centr_term1}-\eqref{eq:centr_term3}, i.e., to enable the distributed computations of the interference at the CC and EC platforms, respectively, the CC needs to communicate the following locally known terms to all ECs, i.e., to each $e\in\mathcal{E}$:
\begin{align}
&|\bm{h}_{s}^{H} \bm{q}_{i}|^2, &&\forall s\in\{\mathcal{K}_e\cap\mathcal{K}^\text{str}\}, \forall i\in\mathcal{K}_c,\label{eq:DRM_CC_exchange_1}\\
&|\bm{h}_{w}^{H} \bm{q}_{i}|^2, &&\forall w\in\{\mathcal{K}_e\cap\mathcal{K}^\text{weak}\}, \forall i\in\mathcal{K}_c,\label{eq:DRM_CC_exchange_2}\\
&|g_{s',w'}|^2 {p_{s'}}, \hspace*{-.2cm}&&\forall s'\in\{\mathcal{K}_c\cap\mathcal{K}^\text{str}\}, \forall w'\in\{\mathcal{K}_e\cap\mathcal{K}^\text{weak}\}.\label{eq:DRM_CC_exchange_3}
\end{align}
Similarly, each EC $e$ needs to send out its locally known terms to all other computing platforms:
\begin{align}
&|\bm{h}_{s}^{H} \bm{q}_{i}|^2, &&\forall s\in\mathcal{K}^\text{str}\backslash\{\mathcal{K}_e\}, \forall i\in\mathcal{K}_e,\label{eq:DRM_EC_exchange_1}\\
&|\bm{h}_{w}^{H} \bm{q}_{i}|^2, &&\forall w\in\mathcal{K}^\text{weak}\backslash\{\mathcal{K}_e\}, \forall i\in\mathcal{K}_e,\label{eq:DRM_EC_exchange_2}\\
&|g_{s',w'}|^2 {p_{s'}}, \hspace*{-.2cm}&&\forall s'\in\{\mathcal{K}_e\cap\mathcal{K}^\text{str}\}, \forall w'\in\mathcal{K}^\text{weak}\backslash\{\mathcal{K}_e\}.\label{eq:DRM_EC_exchange_3}
\end{align}
Based on the above information exchange, the DRM procedure then applies similar CRM steps, except adding an additional step, which now relies on exchanging the interference terms \eqref{eq:DRM_CC_exchange_1}-\eqref{eq:DRM_EC_exchange_3} among the computing platforms (i.e., CC and ECs), e.g., using an independent control module. Fig.~\ref{fig:distributed_exchange_protocol} illustrates the information interaction performed by the DRM in a network consisting of one CC and two ECs. That is, the procedure involves three sub-steps, namely, $1$, the CC distributes \eqref{eq:DRM_CC_exchange_1}-\eqref{eq:DRM_CC_exchange_3} to the ECs, as well as $2$ and $3$, where each EC distributes \eqref{eq:DRM_EC_exchange_1}-\eqref{eq:DRM_EC_exchange_3} to the CC and the other EC, respectively. The steps of the overall DRM method are described in detail in Algorithm~\ref{alg:drm}.
\begin{figure}[t]
\centering
\includegraphics[width=.99\linewidth]{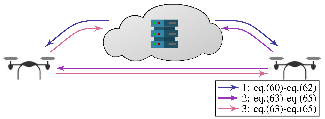}
\vspace*{-.25cm}
\caption{Distributed information exchange protocol.}
\label{fig:distributed_exchange_protocol}
\end{figure}
\begin{algorithm}[b]
\caption{Distributed Resource Management}
\begin{algorithmic}[1]
	\STATE Initialize $\bm{q}$ and $\bm{p}$ to feasible values\\ \vspace*{-.05cm}
	\textbf{Repeat:} until convergence \vspace*{-.05cm}
	\STATE Update weights $\boldsymbol{\beta}$ using \eqref{eq:updatebeta1}, \eqref{eq:updatebeta2} \vspace*{-.05cm}
	\STATE Update $\bm{\tilde{q}}$, $\bm{\tilde{r}}$, $\bm{\tilde{z}}$, and $\bm{\tilde{p}}$ using the previous solution \vspace*{-.05cm}
	\STATE Update auxiliary variables $\bm{a}$ using \eqref{eq:amn1}-\eqref{eq:amm} \vspace*{-.05cm}
	\STATE Solve convex optimization problem \eqref{eq:Opt2}\vspace*{-.05cm}
	\STATE Exchange \eqref{eq:DRM_CC_exchange_1}-\eqref{eq:DRM_EC_exchange_3} between platforms\vspace*{-.05cm}
	\STATE \textbf{End}  \vspace*{-.05cm}
\end{algorithmic}
\label{alg:drm}
\end{algorithm}%
\subsubsection{Computational Complexity}\label{ssec:compcomp}
The DRM implementation involves $\eta_c = (6+B)K_c+3K_c^\text{weak}+BL_c(K_c+K_c^\text{weak})$ variables at the CC, and $\eta_e = 6K_e+3K_e^\text{weak}+L_e(K_e+K_e^\text{weak})$ variables at each EC $e$. 
The DRM's upper-bound computational complexities for the CC and for each EC $e$ become, therefore, $\mathcal{O}(\Omega^\text{max}(\eta_c)^{3.5})$ and $\mathcal{O}(\Omega^\text{max}(\eta_e)^{3.5})$, respectively. For completeness, the complexities of CRM and DRM are compared in Tab.~\ref{tb:complexity}.

Clearly, using DRM substantially reduces the computational burden by distributing the decisions among platforms, as opposed to CRM. While such computational relief comes at the cost of a performance loss, the paper simulations results next highlight that DRM performs close to CRM, and outperforms the benchmarks (i.e., SDMA, NOMA) in various setups, all whilst exhibiting a valuable scalabilty in terms of runtime, especially in dense networks.
\begin{table}[t]
\vspace*{.1cm}
\renewcommand{\arraystretch}{1.05}
\centering
\begin{tabular}{c | c | c}
	\hline
	\hspace*{-.2cm}Scheme\hspace*{-.15cm} & Central cloud & Edge computer \\\hline\hline
	CRM & \multicolumn{2}{c}{\hspace*{-.9cm}$\mathcal{O}(\Omega^\text{max}(6K+3K^\text{weak}+K_cB+BL_c(K_c+K_c^\text{weak})$\hspace*{-.15cm}} \\
	& \multicolumn{2}{c}{\hspace*{2.9cm}$+\sideset{}{_{e\in\mathcal{E}}}\sum L_e(K_e+K^\text{weak}_e)^{3.5})$\hspace*{-.15cm}} \\
	\hline
	DRM & \hspace*{-.3cm}$\mathcal{O}(\Omega^\text{max}((6+B)K_c+3K_c^\text{weak}$\hspace*{-.15cm} & \hspace*{-.8cm}$\mathcal{O}(\Omega^\text{max}(6K_e+3K_e^\text{weak}$\\
	& \hspace*{.6cm}$+BL_c(K_c+K_c^\text{weak}))^{3.5})$\hspace*{-.15cm} & \hspace*{.6cm}$+L_e(K_e+K_e^\text{weak}))^{3.5})$\\
	\hline
\end{tabular}
\vspace*{-.15cm}
\caption{Computational complexities of CRM and DRM.}
\label{tb:complexity}
\vspace*{-.1cm}
\end{table}%


\section{Simulations}\label{sec:sim}
To assess the numerical performance of the proposed algorithms, we consider a hybrid CC/MEC network which enables XR use-cases, e.g., virtual reality video streaming or other XR-related computation tasks. The simulation parameters are given in Tab.~\ref{tb:simparam}. Note that, at the network edge, a total of $2E$ strong and weak devices are served. Devices are paired in groups of two, where the distance of strong and weak devices can be up to $100$ m. While the CC devices are placed inside the hexagonal cell-layout, the EC devices are placed randomly outside the hexagonal BS cells; each UAV is placed above its served strong device.
\begin{table}[t]
\renewcommand{\arraystretch}{1.1}
\centering
\begin{tabular}{l | l | l}
	\hline 
	CC-related & EC-related & Device-related\\\hline
	\hline
	$B=4$ BSs & $E=6$ UAVs & $K=30$ devices\\
	$L_c=6$ antennas & $L_e=2$ antennas & \\
	$500$ m inter-BS distance & $125$ m UAV altitude & \\
	$P_b^\text{max}=32$ dBm & $P_e^\text{max}=22$ dBm & $P_s^\text{max}=20$ dBm\\
	$f_0^\text{max}=5\cdot 10^{10}$ cycles/s & $f_e^\text{max}=10^9$ cycles/s & $F_k=10^7$ cycles\\
	$R_b^\text{max}=100$ Mbps & $o_e = 10^{-28}$, $\mu_e = 3$ & $D_k=10^4$ bits\\
	$W=10$ MHz & $\Phi_e=100$ W & $T_k=50$ ms\\\hline
\end{tabular}
\vspace*{-.15cm}
\caption{Simulation parameters.}
\label{tb:simparam}
\end{table}%

The XR control input, e.g., movement and field-of-view data, is of small data size and assumed to be transmitted in the order of $1$ ms \cite{9363888}. At the CC and at each EC, the virtual environment is modeled \cite{9363888}, the video streams are encoded \cite{10033423}, or other processing tasks are computed \cite{8764580}. Such processing tasks are in the order of $10^2$ \cite{8764580} to $10^3$ \cite{10033423,8434285} cycles/bit. Afterwards, the encoded video stream or task results are transmitted to the devices. As can be seen in Tab.~\ref{tb:simparam}, the delay requirements per device are set as $T_k = 50$ ms \cite{10033423}. The related computation and power requirements are set according to \cite{8764580,8434285}.
For illustration, we set the noise power-spectral density to $-159$ dBm/Hz and the convergence threshold to $\omega=0.1$. 
The beamforming vectors are initialized randomly, and the transmit powers are initially set to their maximum values. 

To assess the proposed algorithm's performance, we apply the CRM, which forgoes the inter-platform cooperation limits and virtually treats the CC and all ECs as one resource management entity, and also compare it to the DRM performance. We also utilize SDMA \cite{10008481,5074583} and NOMA \cite{7973146} as benchmarks to compare the proposed Co-NOMA scheme against various state-of-the-art transmission schemes. The NOMA scheme specifically utilizes the same device pairing as the Co-NOMA scheme, enabling one-layer of successive interference cancellation at each strong device only.

\subsection{Channel Model}
The considered simulations adopt three different types of channels:
$(a)$ The BS-device channels utilize the 3GPP specified pathloss $\text{PL}_{k,b}(dB) = 128.1 + 37.6 \log_{10}(\text{dist}_{k,b})$, where $\text{dist}_{k,b}$ is the distance between BS $b$ and device $k$ in km; $(b)$ The UAV-device channels utilize $\text{PL}_{k,e}(dB) = 20 \log_{10}(\text{dist}_{k,e}) + 20 \log_{10}(4\pi/c) + 20 \log_{10}(f) + \text{Pr}\{\text{LoS}\} \cdot 6 + \text{Pr}\{\text{NLoS}\} \cdot 20$, where $\text{Pr}\{\text{LoS}\}$ denotes the probability of a line of sight (LoS) connection defined as $\text{Pr}\{\text{LoS}\}=(1+9.61\cdot\text{exp}(-0.16\cdot(\theta-9.61)))^{-1}$, with elevation angle $\theta$, and $\text{Pr}\{\text{NLoS}\} = 1 - \text{Pr}\{\text{LoS}\}$, $f$ is the carrier frequency set to $5$ GHz, and $c$ is the speed of light \cite{pl}; $(c)$ The D2D channels follow the short-range outdoor channel model ITU-1411, e.g., similar to \cite{8006944}, with $5$ GHz carrier frequency, antenna height of $1.5$ m, and $2.5$ dB antenna gain. Each channel $(a)-(c)$ consists of pathloss and Rayleigh fading ($\mathcal{CN}(0,1)$), and accounts for the log-normal shadowing effect (BS-device $8$ dB, UAV-device $4$ dB, and device-device $10$ dB standard deviation).

\subsection{Impact of Blockage}
\begin{figure}[t]
\centering
\includegraphics[width=.99\linewidth]{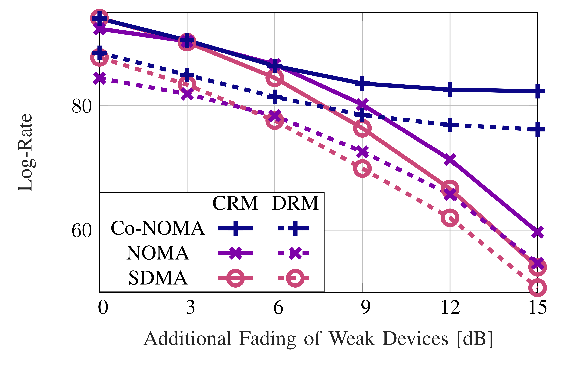}
\vspace*{-.35cm}
\caption{Log-rate over the weak devices' additional fading for all considered schemes.}
\label{res:yLogR_xAddFading_all}
\end{figure}
\begin{figure}[t]
\centering
\includegraphics[width=.99\linewidth]{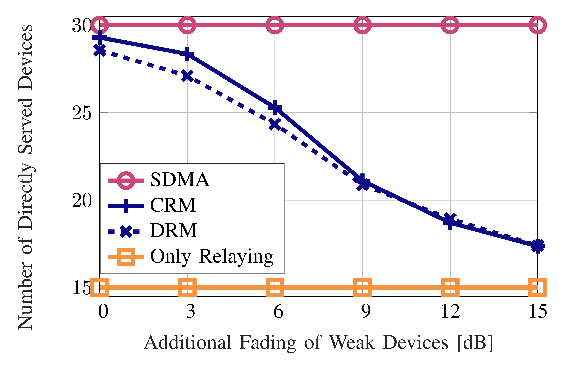}
\vspace*{-.35cm}
\caption{The number of directly served weak devices versus the additional fading.}
\label{res:yNumDirect_xTDMAfactor}
\end{figure}
An essential motivation of the Co-NOMA scheme is the resilience to adverse channel conditions or blockage events at the weak devices. Hence, in the following set of simulations, we illustrate the system performance as a function of the incremental fading experienced by each weak device, hereafter denoted by \emph{additional fading of weak devices}. Such parameter allows to capture the channel conditions of the weak devices, i.e., at high values, weak devices experience poor channel conditions. In contrast, low values indicate that weak devices experience comparable channel quality to their strong counterparts. We also note that we perform a linear search over the time-split factor $\nu$, where we search the values $\nu\in[0.4,0.5,\cdots,1]$. That is, we consider allocating $60$\%, $50$\%, $\cdots$, and $0$\% of the time to the relaying procedure. At each point of the x-axis for Co-NOMA, we then choose the factor $\nu$ that maximizes the log-rate objective.

\textbf{Log-rate performance:} 
First, in Fig.~\ref{res:yLogR_xAddFading_all}, we compare Co-NOMA, NOMA, and SDMA for both CRM and DRM, each in terms of log-rate versus the weak device additional fading. The figure first shows how the log-rate decreases as the additional fading increases. Such result is indeed expected as the RAN looses its available channel resources whenever the weak devices become subject to higher channel fading.
Fig.~\ref{res:yLogR_xAddFading_all} particularly shows how the proposed optimized scheme, Co-NOMA, achieves the highest log-rate. Interestingly, even at good channel conditions, e.g., $0$ or $3$ dB, Co-NOMA can achieve the envelope of the SDMA scheme thanks to its flexibility in link selection. That is, in this region, relaying is not effective as compared to serving all devices directly. Additionally, compared to SDMA and NOMA, which gradually decrease after $6$ dB, Co-NOMA keeps the log-rate stabilized and provides an almost constant log-rate despite the worsening of the channel conditions.
Further, Fig.~\ref{res:yLogR_xAddFading_all} shows that the distributed Co-NOMA, i.e., DRM, outperforms the distributed SDMA and NOMA in all points. It even overcomes the CRM NOMA and CRM SDMA after $9$ dB additional fading.
The figure, in fact, shows that the proposed DRM Co-NOMA remains at a constant gap from the centralized algorithm, i.e., CRM.
Further, it is worthwhile to note that the DRM achieves such a valuable performance despite its relatively faster runtime compared to the CRM and under realistic coordination constraints.
Fig.~\ref{res:yLogR_xAddFading_all}, therefore, clearly motivates the improvement of Co-NOMA compared to the benchmark schemes, thereby underlining its applicability in future 6G networks.

\textbf{Direct/relay link selection:} 
To further investigate the Co-NOMA perspective in terms of strong-to-weak device relaying, Fig.~\ref{res:yNumDirect_xTDMAfactor} depicts the number of direct served devices over the weak device additional fading.
The proposed Co-NOMA schemes are shown to operate within the bounds of serving all devices directly, i.e., SDMA, and serving all weak devices through their respective strong devices, i.e., only relaying.
Fig.~\ref{res:yNumDirect_xTDMAfactor} supports the previous results on the intelligent adaption of Co-NOMA to the channel conditions. On the left hand side of Fig.~\ref{res:yNumDirect_xTDMAfactor}, i.e., in the regimes where the strong and weak device channel conditions are relatively similar, Co-NOMA tends toward the SDMA/only direct links solution. As the channel conditions become uneven, the scheme flexibly enables more weak devices to be served using the relay link.
These results emphasize the flexibility of the proposed algorithms in terms of link selection and adapting to adverse channel conditions. Specifically for future communication networks, such flexible multiple access and resource management schemes becomes vital in many use-cases, e.g., remote diagnostics and immersive gaming.

\textbf{Time-split factor:} 
To illustrate the impact of the time split factor on the network performance, Fig.~\ref{res:yLogR_xTDMAfactor_FCP} plots the log-rate for the CRM Co-NOMA versus $\nu$ using different values of the weak device additional fading values.
Fig.~\ref{res:yLogR_xTDMAfactor_FCP} validates how higher fading values decrease the log-rate. However, the behavior with respect to the time-split factor reveals several insights on the Co-NOMA scheme.
It can be noticed that each fading curve has its peak at different values of $\nu$. For instance, the best factor at $15$ dB additional fading is $0.5$, which infers that half of the time should be reserved for relaying. Contrary to that, $3$ dB fading has a clear global optimum at $\nu=1$, which implies that relaying is mostly ignored.
For all curves, we note that the local optimum shifts toward the right hand side, i.e., toward $\nu=1$, when decreasing the weak device additional fading.
Another observation from Fig.~\ref{res:yLogR_xTDMAfactor_FCP} is the existence of global and local optima in the low fading regime. In contrast, at the relatively high fading regime, i.e., beyond $6$ dB, only single stationary optimal points are observed. Such result is reasonable since the weak devices with high additional fading can hardly be served using the direct link. Hence the algorithm shifts toward activating more relay links, which results in lower values for $\nu$, i.e., more time allocated toward relaying. In the low fading regime, however, it is more favorable to allocate all the time toward direct links. Yet, as soon as there is some time allocated toward relaying (i.e., $\nu\leq 0.9$), we again observe the influence of time-split factor on the log-rate, i.e., the throughput and fairness trade-off. Such results emphasize the need to carefully tune all optimization parameters jointly, in order to achieve the best log-rate.
Thus, Fig.~\ref{res:yLogR_xTDMAfactor_FCP} shows that there is a trade-off between the time-split (direct serving vs. relay activation) and the weak device channel quality. Such observation highlights how the proposed scheme is able to flexibly adjust to the network deployment case and achieves optimal values if the time-split factor is taken into account.
In Fig.~\ref{res:yLogR_xTDMAfactor_PDP}, we show the same comparisons for the DRM Co-NOMA scheme. While the overall log-rate decreases in such cases, the major difference to Fig.~\ref{res:yLogR_xTDMAfactor_FCP} is the behavior of $6$dB additional fading. While for CRM, the global optimum is at $\nu=1$, in DRM the best log-rate is achieved with $\nu=0.7$. Hence, relaying is of greater importance for a distributed scheme, as it can better cope with the interference problems of DRM. In other words, through serving more weak devices trough relaying, parts of the interference may be mitigated, which is particularly problematic for distributed schemes.
In summary, Fig.~\ref{res:yLogR_xTDMAfactor_FCP} and Fig.~\ref{res:yLogR_xTDMAfactor_PDP} underline the need to further account for the time-split factor when designing a Co-NOMA scheme, which needs to balance both direct link serving and relaying through strong devices.
\begin{figure}[t]
\centering
\includegraphics[width=.99\linewidth]{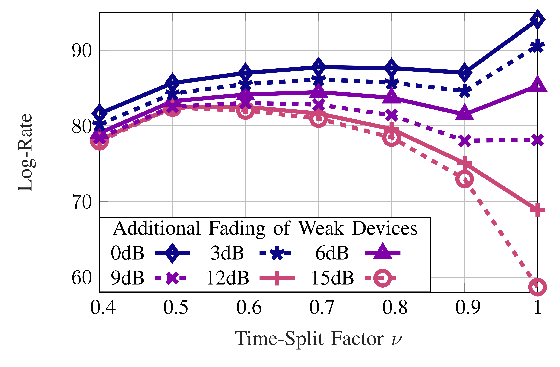}
\vspace*{-.35cm}
\caption{Log-rate over the time-split factor $\nu$ for the CRM Co-NOMA.}
\label{res:yLogR_xTDMAfactor_FCP}
\end{figure}
\begin{figure}[t]
\centering
\includegraphics[width=.99\linewidth]{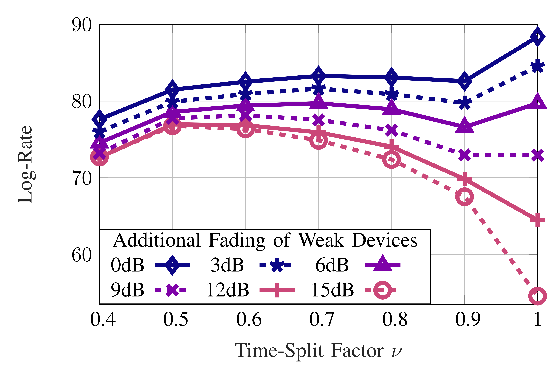}
\vspace*{-.35cm}
\caption{Log-rate over the time-split factor $\nu$ for the DRM Co-NOMA.}
\label{res:yLogR_xTDMAfactor_PDP}
\end{figure}
\begin{figure}[t]
\centering
\includegraphics[width=.99\linewidth]{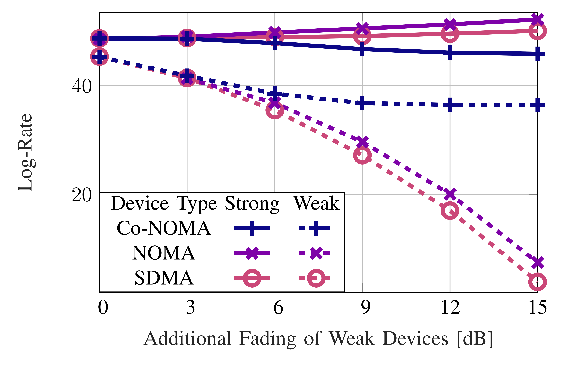}
\vspace*{-.35cm}
\caption{Log-rate over the weak device additional fading for the CRM Co-NOMA.}
\label{res:yLogR_xAddFading_fcp}
\end{figure}
\begin{figure}[t]
\centering
\includegraphics[width=.99\linewidth]{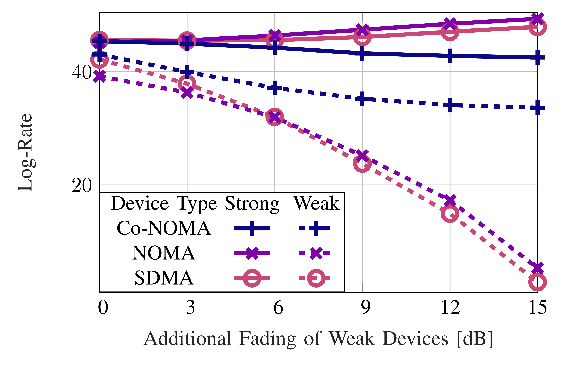}
\vspace*{-.35cm}
\caption{Log-rate over the weak device additional fading for the DRM Co-NOMA.}
\label{res:yLogR_xAddFading_pdp}
\end{figure}

\textbf{Log-rate per device type:} 
To gain further insights of the proposed Co-NOMA framework, we next strike a comparison of the log-rate of both the strong and weak devices over the weak device additional fading in Fig.~\ref{res:yLogR_xAddFading_fcp} (CRM) and Fig.~\ref{res:yLogR_xAddFading_pdp} (DRM).
In general, an observation is that the weak device log-rate decreases with decreasing the channel quality. This decrease is particularly severe for SDMA and NOMA, which can hardly tackle the changing channel conditions. We note that NOMA has measurable gains over SDMA, as the additional decoding capability at the strong device can mitigate some of the channel quality impairments. Co-NOMA, however, experiences minor decrease in weak device log-rate up to $9$ dB, after which the log-rate becomes almost constant.
Interestingly, for the strong devices, Fig.~\ref{res:yLogR_xAddFading_fcp} shows that the log-rates increase with the rising weak device additional fading for both SDMA and NOMA. That is, while the weak devices experience decreased quality-of-service, the remaining resources, which cannot empower the weak devices, are assigned to at least enhance the strong device service. Co-NOMA, however, behaves more tactfully, i.e., the log-rate of the strong devices decreases only slightly.
A noteworthy contribution of the proposed scheme is the characteristic that Co-NOMA sacrifices parts of the strong device log-rate to keep the weak device log-rates stabilized even at unfavorable channel conditions.
For instance, at $15$dB additional fading, there is only a minor log-rate loss comparing Co-NOMA's strong devices to SDMA or NOMA, while there is a major log-rate gain for the weak devices.

\textbf{Fairness index:} 
As a last remark on the impact of blockage, we show how the network fairness is affected as a function of the weak device additional fading. In Tab.~\ref{tb:jain}, we compare Co-NOMA, NOMA, and SDMA for the DRM, each in terms of Jain's fairness index versus the weak device additional fading. Jain's fairness index hereby depicts a measure for the network fairness, i.e., ${\left(\sum_{k\in\mathcal{K}} r_k\right)^2}/{K\cdot \sum_{k\in\mathcal{K}} r_k^2}$ \cite{jain1984quantitative}. When there is no additional fading, the indices of all schemes are relatively close to each other. 
When the additional fading increases, Jain's fairness index for NOMA and SDMA significantly degrades. However, Co-NOMA shows the opposite behavior, i.e., the fairness index grows with additional fading. In relation to Fig.~\ref{res:yLogR_xAddFading_all}, which illustrates the log-rate versus additional fading, Co-NOMA has a slightly reduced but stable log-rate, yet is able to increase the fairness among devices. 
The results of Tab.~\ref{tb:jain} clearly show the equitable performance of Co-NOMA, especially when the channel conditions become challenging. In fact, Co-NOMA exhibits promising and effective performance in situations where the weak device channels are weak, while the other schemes, i.e., NOMA and SDMA, fail to operate effectively.
\begin{table}[t]
\renewcommand{\arraystretch}{1.1}
\centering
\begin{tabular}{c  c  c  c  c  c  c}
	\hline 
	Fading [dB] \hspace*{-.2cm}& $0$ \hspace*{-.2cm}& $3$ \hspace*{-.2cm}& $6$ \hspace*{-.2cm}& $9$ \hspace*{-.2cm}& $12$ \hspace*{-.2cm}& $15$ \\\hline
	SDMA \hspace*{-.2cm}& 0.6553 \hspace*{-.2cm}& 0.6419 \hspace*{-.2cm}& 0.6075 \hspace*{-.2cm}& 0.5530 \hspace*{-.2cm}& 0.5096 \hspace*{-.2cm}& 0.4636\\
	NOMA \hspace*{-.2cm}& 0.6969 \hspace*{-.2cm}& 0.6676 \hspace*{-.2cm}& 0.6243 \hspace*{-.2cm}& 0.5658 \hspace*{-.2cm}& 0.5106 \hspace*{-.2cm}& 0.4601\\
	Co-NOMA \hspace*{-.2cm}& 0.6780 \hspace*{-.2cm}& 0.6882 \hspace*{-.2cm}& 0.6949 \hspace*{-.2cm}& 0.7037 \hspace*{-.2cm}& 0.7021 \hspace*{-.2cm}& 0.7077\\\hline
\end{tabular}
\vspace*{-.15cm}
\caption{Jain's fairness index over the weak device additional fading for the DRM schemes.}
\label{tb:jain}
\end{table}%

The proposed Co-NOMA scheme outperforms the reference SDMA and NOMA schemes both in terms of weak device log-rate and network fairness perspectives. Such results, in fact, emphasize the equitable service potential of the proposed optimized Co-NOMA scheme, and its applicability to several XR use-cases, e.g., telepresence experiences, model visualization, and extended teaching methodologies.

\subsection{Residual UAV Power}
\begin{figure}[t]
\centering
\includegraphics[width=.99\linewidth]{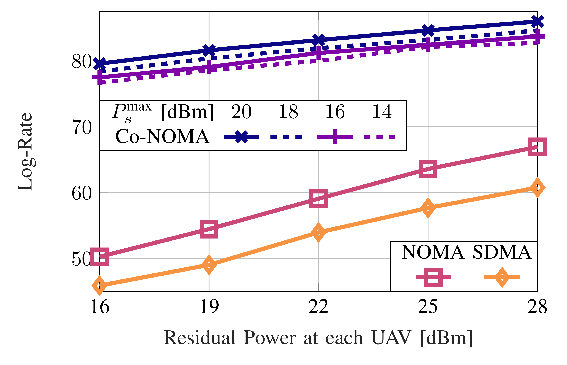}
\vspace*{-.35cm}
\caption{Log-rate over the residual transmit and computation power at the UAVs ($P_e^\text{max}-\Phi_e$) for the CRM schemes.}
\label{res:yLogR_xUAVpwr_fcp}
\end{figure}
The EC-enabled UAVs have strict power consumptions, i.e., the resource management accounts for transmission, computation, and operational power. While the operational power $\Phi_e$ is fixed, the residual power available at each UAV, denoted by $P_e^\text{max}-\Phi_e$, needs to be flexibly adjusted to the network computation and communication needs.

Hence, in the following set of simulations, we set the weak device additional fading to $12$ dB, $\nu=0.6$, and vary two parameters, namely, $P_e^\text{max}-\Phi_e$, i.e., the residual power at each UAV, and $P_s^\text{max}$, i.e., the strong device transmit power. Fig.~\ref{res:yLogR_xUAVpwr_fcp} shows the log-rate of the CRM schemes as a function of the residual power. First, a general observation is the increase of log-rate with increasing power values, which is reasonable as the strict UAV power limitations pose a major bottleneck in the system model. While SDMA and NOMA depict increasing log-rate behavior as the residual power increases, Co-NOMA already outperforms both reference schemes on all levels. Specifically, in low-power regions, i.e., the most strict residual power constraint for the UAVs, Co-NOMA has the largest gain over SDMA and NOMA.
Further, as seen in Fig.~\ref{res:yLogR_xUAVpwr_fcp}, the strong device maximum transmit power $P_s^\text{max}$ has observable impacts on the performance of Co-NOMA. While a value of $20$ dBm achieves the highest performance, a $14$dBm value achieves the lowest log-rate among Co-NOMA schemes, which is expected as relaying plays an important role in the proposed architecture. Yet, even under the strictest D2D power constraint, Co-NOMA significantly outperforms the baselines. 


The above results related to the log-rate performance versus different power constraints underline the appreciable improvements of the proposed schemes as compared to state-of-the-art. Particularly in strict regions, i.e., under low residual and strong device powers, which are most realistic as the usual XR device is battery-limited, Co-NOMA achieves optimized performance levels.

\subsection{Computation Capacity}
\begin{figure}[t]
\centering
\includegraphics[width=.99\linewidth]{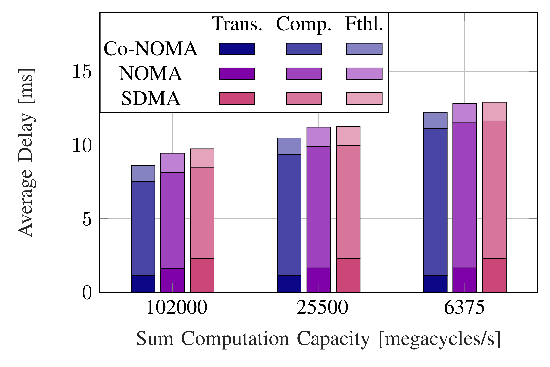}
\vspace*{-.35cm}
\caption{Average delay in ms over the sum computation capacity for the CRM Co-NOMA and CRM SDMA comparing transmission (trans.), computation (comp.), and fronthaul (fthl.) delay.}
\label{res:yAvgDelay_xCompCap_fcp}
\end{figure}
While the log-rate demonstrates a trade-off between throughput and network fairness, recall that the proposed method incorporates further aspects of the joint communication and computations, which are of particular interest for enabling delay-sensitive XR use-cases, e.g., immersive gaming and telepresence. Hence, in the following set of simulations, we let the additional fading be $9$ dB, $R_b^\text{max}=50$ Mbps, $F_k = 10^6$ cycles, $D_k = 10^5$ bits, and vary the sum computation capacity to illustrate the delay performance of the proposed algorithm. That is, we initially let $f_0^\text{max}=10\cdot10^{10}$ cycles/s and $f_e^\text{max}=2\cdot10^{9}$ cycles/s and proportionally reduce such values for each simulation point (e.g., the x-axis in Fig.~\ref{res:yAvgDelay_xCompCap_fcp}).

To this end, Fig.~\ref{res:yAvgDelay_xCompCap_fcp} depicts the average delay per device in ms versus the sum computation capacity. Fig.~\ref{res:yAvgDelay_xCompCap_fcp}, in particular, shows the average delay split into transmission, computation, and fronthaul delay components for each CRM scheme.
As expected, Fig.~\ref{res:yAvgDelay_xCompCap_fcp} shows that the average delay of all schemes increases while decreasing the sum computation capacity, which is reasonable as the computation delay increases for the average device.
However, for all values of the considered sum computation capacity, Co-NOMA has the lowest average delay as compared to NOMA and SDMA. Such behavior is even more visible when comparing only the transmission delays. Co-NOMA not only reduces the transmission delay, but also provides a lower average delay, while accounting both for computation and for fronthauling aspects.

\begin{figure}[t]
\centering
\includegraphics[width=.99\linewidth]{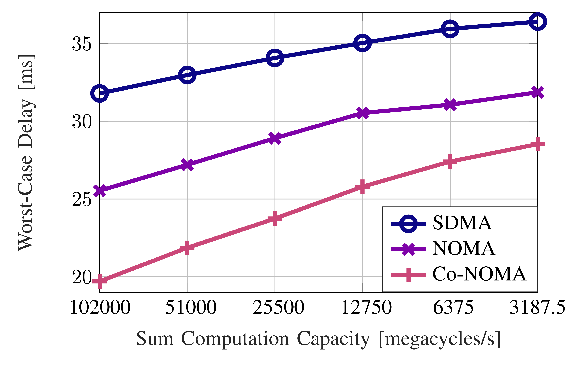}
\vspace*{-.35cm}
\caption{Worst-case delay in ms over the sum computation capacity for the CRM schemes.}
\label{res:yWCD_xUAVpwr_fcp}
\end{figure}
Fig.~\ref{res:yWCD_xUAVpwr_fcp} now shows the worst-case delay for the same simulation set, i.e., the device with the worst delay performance.
As before, one can see that the worst-case delay increases with a decrease in the sum computation capacity. That is, the computing capability of the CC/ECs has a significant impact on the overall delay.
Once again, the Co-NOMA scheme clearly outperforms the benchmarking schemes in terms of worst-case delay, which underlines its applicability for XR delay sensitive applications. Together with the insights on the individual delay composition, it comes clear that our optimized Co-NOMA is a promising technology with several advantages in terms of delay, throughput, and fairness.

\subsection{Scalability in Dense Networks}
\begin{figure}[t]
\centering
\includegraphics[width=.99\linewidth]{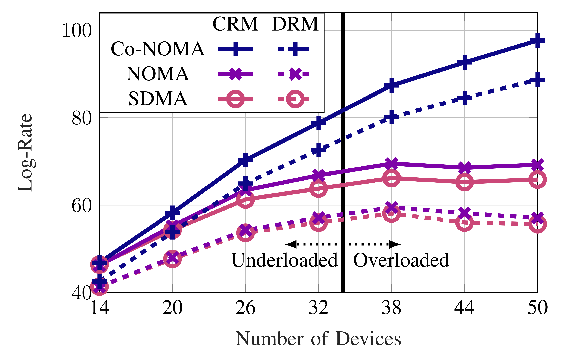}
\vspace*{-.35cm}
\caption{Log-rate over the number of devices comparing all schemes in underloaded and overloaded networks.}
\label{res:yLogR_xNumU}
\end{figure}
To assess the performance of the considered scheme in dense networks, we now consider different network sizes. Specifically, we vary the number of devices sharing the same time and frequency resource from $14$ to $50$, by considering a network consisting of $3$ BSs, each equipped with $6$ antennas, and $4$ UAVs, each equipped with with $4$ antennas. That is, we consider underloaded and overloaded network deployments. In underloaded networks, the number of total available transmit antennas is higher than the number of devices, and vice versa. By linearly searching over $\nu$, by using $T_k = 100$ ms, and by setting the additional fading to $9$ dB, Fig.~\ref{res:yLogR_xNumU} plots the log-rate versus the number of devices for all considered schemes. The figure shows that the log-rate increases jointly with the number of devices in case of underloaded networks for all schemes. This is reasonable as more devices bring better opportunities to increase the throughput and the fairness up to some point. As can be seen for SDMA, the log-rate experiences a saturation beyond a certain value in terms of the device numbers, i.e., in overloaded networks. This is particularly the case since SDMA depends on the number of antennas needed to orthogonalize the channels for efficient transmission. As the number of devices overtakes the number of antennas, however, the log-rate can no longer increase. A similar behavior is visible for NOMA, which, however, performs better than SDMA in overloaded networks due to the more sophisticated interference management. In contrast, Co-NOMA can, through the efficient utilization of relay links, keep increasing the log-rate so as to cope with the fairness issue of dense networks. Interestingly, DRM Co-NOMA even outperforms the centralized versions of both SDMA and NOMA when the number of devices exceeds $26$, which emphasizes the advantages of the proposed decentralized method in future dense XR networks.

\begin{figure}[t]
\centering
\includegraphics[width=.99\linewidth]{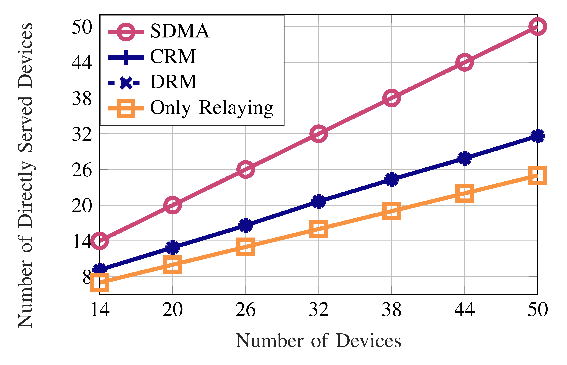}
\vspace*{-.35cm}
\caption{The number of direct served weak devices over the number of devices.}
\label{res:yNumDirect_xNumU}
\end{figure}
Such results are further verified in Fig.~\ref{res:yNumDirect_xNumU}, which shows the number of directly served devices (i.e., without relaying) versus the total number of devices. As the number of devices increases, the Co-NOMA scheme makes better use of the relay links. Fig.~\ref{res:yNumDirect_xNumU} also shows that both CRM and DRM keep a relatively similar slope as the \emph{Only Relaying} case, whereas the distance to the SDMA (only direct links) grows as the number of devices increases. That is, the device relaying aspect of Co-NOMA becomes especially relevant for dense networks.

To summarize, the results in this subsection highlight the performance gains of Co-NOMA, especially when the network becomes dense. A special result is that DRM Co-NOMA outperforms the centralized reference schemes when the number of devices is large. Therefore, with a lower runtime and complexity, higher values of the log-rate are attained by DRM Co-NOMA, which provides the best throughput and fairness trade-off among all simulated schemes.

\subsection{Runtime}
\begin{table}[t]
\renewcommand{\arraystretch}{1.0}
\centering
\begin{tabular}{c c c c c c c c}
	\hline
	\hspace*{-.2cm}Number of devices \hspace*{-.25cm}& 	$14$ \hspace*{-.25cm}& $20$ \hspace*{-.25cm}& $26$ \hspace*{-.25cm}& $32$ \hspace*{-.25cm}& $38$ \hspace*{-.25cm}& $44$ \hspace*{-.25cm}& $50$\\\hline
	\hspace*{-.2cm}CRM SDMA \hspace*{-.25cm}& 			$25.8$ \hspace*{-.25cm}& $39.2$ \hspace*{-.25cm}& $56.3$ \hspace*{-.25cm}& $84.2$ \hspace*{-.25cm}& $116.7$ \hspace*{-.25cm}& 	$143.7$ \hspace*{-.25cm}& 	$173.5$\\
	\hspace*{-.2cm}CRM NOMA \hspace*{-.25cm}&  			$28.3$ \hspace*{-.25cm}& $44.1$ \hspace*{-.25cm}& $72.8$ \hspace*{-.25cm}& $113.4$ \hspace*{-.25cm}& $161.8$ \hspace*{-.25cm}& $218.7$ \hspace*{-.25cm}& $282.2$\\
	\hspace*{-.2cm}CRM Co-NOMA \hspace*{-.25cm}& 		$40.4$ \hspace*{-.25cm}& $64.9$ \hspace*{-.25cm}& $103.1$ \hspace*{-.25cm}& $150.6$ \hspace*{-.25cm}& $208.6$ \hspace*{-.25cm}& $281.9$ \hspace*{-.25cm}& $375.5$\\
	\hspace*{-.2cm}DRM SDMA \hspace*{-.25cm}& 			$13.5$ \hspace*{-.25cm}& $17.3$ \hspace*{-.25cm}& $22.0$ \hspace*{-.25cm}& $26.0$ \hspace*{-.25cm}& $31.3$ \hspace*{-.25cm}& $35.6$ \hspace*{-.25cm}& $39.1$\\
	\hspace*{-.2cm}DRM NOMA \hspace*{-.25cm}&  			$14.1$ \hspace*{-.25cm}& $17.6$ \hspace*{-.25cm}& $21.8$ \hspace*{-.25cm}& $27.0$ \hspace*{-.25cm}& $33.0$ \hspace*{-.25cm}& $38.1$ \hspace*{-.25cm}& $45.3$\\
	\hspace*{-.2cm}DRM Co-NOMA \hspace*{-.25cm}& 		$20.7$ \hspace*{-.25cm}& $26.8$ \hspace*{-.25cm}& $33.1$ \hspace*{-.25cm}& $39.1$ \hspace*{-.25cm}& $46.4$ \hspace*{-.25cm}& $53.9$ \hspace*{-.25cm}& $63.0$\\\hline
\end{tabular}
\vspace*{-.15cm}
\caption{Runtime values in seconds over $K$ for all schemes.}
\label{tb:runtime}
\end{table}%
In Section~\ref{ssec:compcomp}, the computational complexity of the DRM Co-NOMA is shown to be significantly lower than the CRM's complexity. To further observe the computation characteristics of the proposed methods, we next evaluate the runtimes of all considered schemes as a function of the number of devices in Tab.~\ref{tb:runtime}, for the same parameter set as in the last subsection.
The first and most obvious observation is the joint increase of runtime as the number of devices increases.
Comparing only the centralized schemes, Co-NOMA has the highest runtime and SDMA the lowest, which highly correlates to the optimization algorithmic complexity. Thus, CRM Co-NOMA can provide the best log-rate performance, which comes at the cost of longer execution time.
Among DRM schemes, Co-NOMA has a factor of around $1.5$ times the runtime of SDMA for all number of devices. Another important observation is the comparison of DRM Co-NOMA and CRM SDMA or NOMA. That is, DRM Co-NOMA provides significantly faster runtimes than CRM SDMA or NOMA at $K=50$, and also outperforms both schemes in terms of log-rate, as illustrated earlier in Fig.~\ref{res:yLogR_xNumU}. This noticeable behavior of DRM Co-NOMA illustrates its numerical superiority in terms of both performance and complexity, thereby making it a compelling solution for future XR systems.
To sum up, on the one hand, while CRM Co-NOMA has the highest expected runtime, it also provides the best log-rate performance. On the other hand, the more practical, more scalable, DRM Co-NOMA achieves better runtimes at the cost of negligible log-rate performance loss. Such results highlight the potential applicability of the proposed decentralized Co-NOMA scheme in future wireless networks for promoting both high throughput and high fairness values.

\subsection{Convergence}
\begin{figure}[t]
\centering
\includegraphics[width=.99\linewidth]{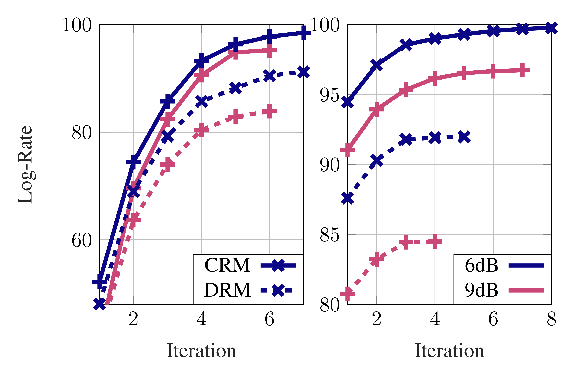}
\vspace*{-.35cm}
\caption{Convergence behavior in terms of log-rate over the iteration number for the Co-NOMA scheme. The convergence behavior of the first loop (left), i.e., under $\ell_0$-norm relaxation, and the second loop (right), i.e., with fixed link selection and clustering, is shown.}
\label{res:yLogR_xIteration}
\end{figure}
Lastly, we numerically validate the convergence of the proposed methods in Fig.~\ref{res:yLogR_xIteration}. The left hand subfigure of Fig.~\ref{res:yLogR_xIteration} shows the log-rate value versus the number of each iterations using both Algorithm~\ref{alg:drm} and Algorithm~\ref{alg:crm}. While CRM converges to a larger log-rate than DRM, both schemes converge in few iterations. At this stage, reasonable link selection and clustering sets are found, we then run both Algorithm~\ref{alg:crm} and Algorithm~\ref{alg:drm} with reduced optimization variables and show the convergence on the right hand side subfigure of Fig.~\ref{res:yLogR_xIteration}. At $6$ dB additional weak device fading, CRM converges within $8$ iterations, while DRM converges within $5$ iterations. This is reasonable as the CRM searches over a larger optimization space due to its centralized perspective.
Further, Fig.~\ref{res:yLogR_xIteration} shows that the algorithms converges even faster when the weak device additional fading is higher, e.g., $9$dB. While the channel condition of the weak devices worsens, the algorithmic link selections are more in favor of the relay links. The opportunities for optimization are also becoming less with worse channel conditions, which is why such result is expected.
However, both schemes converge within few iterations and particularly the DRM provides reasonable and practically feasible resource allocations for optimized log-rate under practical constraints.

\section{Conclusion}\label{sec:con}
The connectivity prospects of 6G communication networks empower the applications of XR, especially those relevant to the medical, educational, engineering, and social contexts. Such performance-sensitive applications necessitate powerful joint communication and computation, connectivity, and delay measures, as well as digitally equity considerations.
To this end, this paper considers a hybrid CC/MEC network consisting of multi-antenna BSs and UAVs, which collaboratively serve a number of XR devices. The CC manages the core-network functions and connects to the BSs via fronthaul links. The UAVs, on the other hand, perform communication, computation, and resource management at the network edge as MEC decentralized platforms. Further, to jointly enhance the system fairness and cope with the adverse channel conditions, we employ a Co-NOMA framework, where the strongly connected XR devices may aid the communication toward the weakly connected XR devices using device-to-device relaying. The paper then formulates a sum logarithmic-rate maximization problem so as to determine the joint computation allocation, beamforming vectors, power allocations, link-selection, and clustering as a means to strike a trade-off between the system throughput and fairness. With the help of techniques from optimization theory, e.g., $\ell_0$-norm relaxation, SCA, and FP, a decentralized algorithm for managing the network resources is proposed. The numerical simulations ot the paper then highlight the appreciable improvement of Co-NOMA over the reference schemes in terms of log-rate, delay performance, and scalability. The proposed optimized decentralized DRM is particularly shown to perform close to the centralized scheme, to outperform the benchmarks in various setups, and to exhibit a valuable scalabilty in terms of complexity and runtime in dense networks, which makes it a valuable candidate for future decentralized XR systems. 


%
%

\appendices

\section{Proof of Lemma \ref{lma1}}\label{app1}
Recall that equation \eqref{eq:rnsc3} is of bilinear nature due to the terms $z_{w}^{(d)} r_{w}^{(d)}$ and $z_{w}^{(r)} r_{w}^\text{aux}$.
An equivalent mathematical formulation of the former term $z_{w}^{(d)} r_{w}^{(d)}$ into a formulation comprising two quadratic functions is given as
\begin{equation}\label{eq:lma1}
z_{w}^{(d)} r_{w}^{(d)} = \frac{1}{4}\left((z_{w}^{(d)}+_{w}^{(d)})^2-(z_{w}^{(d)}-_{w}^{(d)})^2\right).
\end{equation}
A similar reformulation can also be applied to $z_{w}^{(r)} r_{w}^\text{aux}$.

%

Using \eqref{eq:lma1}, we define $\chi_w(\bm{z},\bm{r})$ and can rewrite equation \eqref{eq:rnsc3} into
\begin{align}
\chi_w(\bm{z},\bm{r}) =& 4 r_w - \big((z_{w}^{(d)} + r_{w}^{(d)})^2 - (z_{w}^{(d)} - r_{w}^{(d)})^2\big) \nonumber\\
&- \big( (z_{w}^{(r)} + r_{w}^\text{aux})^2 - (z_{w}^{(r)} - r_{w}^\text{aux})^2 \big) \leq 0.\label{eq:rnsc4}
\end{align}
Some parts of \eqref{eq:rnsc4} remain concave, and so we next apply the SCA framework in order to obtain a convex formulation. Hence, we first identify the reason of non-convexity in \eqref{eq:rnsc4}, which stems from $-(z_{w}^{(d)} + r_{w}^{(d)})^2$ and $-(z_{w}^{(r)} + r_{w}^\text{aux})^2$. To tackle this non-convexity, we consider the following remark.
\begin{remark}
The first order Taylor extension of a function $\Psi(\boldsymbol{\tau})$, approximates $\Psi(\boldsymbol{\tau})$ around the operating point $\boldsymbol{\tilde{\tau}}$
\begin{equation}
\Psi(\boldsymbol{\tau}) \approx \Psi(\boldsymbol{\tilde{\tau}}) + \nabla_{\boldsymbol{\tau}} \Psi(\boldsymbol{\tilde{\tau}})^T (\boldsymbol{\tau}-\boldsymbol{\tilde{\tau}}),
\end{equation}
which is a lower bound for convex functions and an upper bound for concave functions.
\end{remark}%
Now, applying the first order Taylor extension to $-(z_{w}^{(d)} + r_{w}^{(d)})^2$ and $-(z_{w}^{(r)} + r_{w}^\text{aux})^2$ in \eqref{eq:rnsc4}, we obtain
\begin{align}
&\tilde{\chi}_w(\bm{z},\bm{r},\bm{\tilde{z}},\bm{\tilde{r}}) = 4 r_w + ({z}_{w}^{(d)} - {r}_{w}^{(d)})^2 - (\tilde{z}_{w}^{(d)} + \tilde{r}_{w}^{(d)})^2 \label{eq:rnsc42}\\
&- 2 (\tilde{z}_{w}^{(d)} + \tilde{r}_{w}^{(d)})\big[(z_w^{(d)}-\tilde{z}_w^{(d)})+({r}_{w}^{(d)}-\tilde{r}_{w}^{(d)}) \big]\nonumber\\
&+ (z_{w}^{(r)} - r_{w}^\text{aux})^2 - (\tilde{z}_{w}^{(r)} + \tilde{r}_{w}^\text{aux})^2 \nonumber\\
&- 2 (\tilde{z}_{w}^{(r)} + \tilde{r}_{w}^\text{aux}) \big[({z}_{w}^{(r)}-\tilde{z}_{w}^{(r)})+({r}_{w}^\text{aux}-\tilde{r}_{w}^\text{aux})\big]\leq 0,\nonumber
\end{align}
where $\bm{\tilde{z}}$ and $\bm{\tilde{r}}$ are optimal solutions reached from the previous iteration.
The following remark and steps of this proof are drawn in analogy to the sequential optimization tool \cite[Proposition 3]{7862919} and \cite{marks1978general}.
\begin{remark}
Applying the SCA technique to the constraints of a maximization problem with differentiable objective function and compact feasible set, such as \eqref{eq:Opt1}, yields a sequence of optimization problems with monotonically increasing optimal values. This sequence converges to a finite limit, which satisfies the KKT conditions of the original problem. For further details, the authors refer to work \cite{7862919}.
\end{remark}
In this case, suppose that the sequence of optimization problems after applying lemma~\ref{lma1} differ to the problem before applying such lemma in one constraint only, namely ${\chi}_w(\bm{z},\bm{r})\leq 0$ (in the original problem) and $\tilde{\chi}_w(\bm{z},\bm{r},\bm{\tilde{z}},\bm{\tilde{r}})\leq 0$ (in each reformulated problem). Due to the nature of the first order Taylor extension, it can be verified that $\tilde{\chi}_w(\bm{z},\bm{r},\bm{\tilde{z}},\bm{\tilde{r}}) \geq {\chi}_w(\bm{z},\bm{r})$. Also, using equations \eqref{eq:rnsc4} and \eqref{eq:rnsc42}, we see that $\tilde{\chi}_w(\bm{\tilde{z}},\bm{\tilde{r}},\bm{\tilde{z}},\bm{\tilde{r}}) = {\chi}_w(\bm{\tilde{z}},\bm{\tilde{r}})$ holds. At last, one can validate that $\nabla\tilde{\chi}_w(\bm{\tilde{z}},\bm{\tilde{r}},\bm{\tilde{z}},\bm{\tilde{r}}) = \nabla{\chi}_w(\bm{\tilde{z}},\bm{\tilde{r}})$, which yields the following:
\begin{align}
\nabla\tilde{\chi}_w(\bm{\tilde{z}},\bm{\tilde{r}},\bm{\tilde{z}},\bm{\tilde{r}}) = \nabla{\chi}_w(\bm{\tilde{z}},\bm{\tilde{r}}) = -4 \begin{bmatrix}
\tilde{r}_w^{(d)} \\
\tilde{z}_w^{(d)} \\
\tilde{r}_w^{\text{aux}} \\
\tilde{z}_w^{(r)}
\end{bmatrix}.
\end{align}
This completes the proof.

\section{Proof of Lemma \ref{lma3}}\label{app2}
Consider the following general remark on the quadratic transform.
\begin{remark}
The quadratic transform decouples the numerator and denominator of a fractional function $\boldsymbol{\Psi}_n^H(\boldsymbol{\tau})\boldsymbol{\Psi}_d^{-1}(\boldsymbol{\tau})\boldsymbol{\Psi}_n(\boldsymbol{\tau})$ by using an auxiliary variable $\bm{\alpha}$ as
\begin{equation}
2 \text{Re}\left\{ \boldsymbol{\alpha}^H \boldsymbol{\Psi}_n(\boldsymbol{\tau}) \right\} - \boldsymbol{\alpha}^H \boldsymbol{\Psi}_d(\boldsymbol{\tau})\boldsymbol{\alpha}.\label{eq:qt2}
\end{equation}
These reformulations introduce the auxiliary variable $\boldsymbol{\alpha}$, which can be updated in an outer loop.
The optimal auxiliary variables are computed by setting the partial derivative of \eqref{eq:qt2} with respect to $\boldsymbol{\alpha}$ to zero and solving for $\boldsymbol{\alpha}$. For further details on FP, the authors refer to work \cite{FP1}.
\end{remark}
Recall the first constraint in \eqref{eq:SINR_all}, which is given as
\begin{equation}
\gamma_{w}^{(r)} \leq 
|\bm{h}_{d_w}^{H} {\bm{q}_{w}^{(r)}}|^2/I_{d_w}(\mathcal{K}^\text{str},\mathcal{K}^\text{weak}\backslash\{w\}),\label{eq:gammawr}
\end{equation}
with
\begin{align}
I_{d_w}&(\mathcal{K}^\text{str},\mathcal{K}^\text{weak}\backslash\{w\}) = \sigma^2 + \sum_{i'\in\mathcal{K}^\text{str}}\big|\bm{h}_i^H\bm{q}_{i'}^{(d)}\big|^2 \nonumber\\
&\hspace*{1.8cm}+ \sum_{j'\in\mathcal{K}^\text{weak}\backslash\{w\}}\big|\bm{h}_i^H\big(\bm{q}_{j'}^{(d)}+\bm{q}_{j'}^{(r)}\big)\big|^2, \label{eq:I_dw}
\end{align}
To apply the quadratic transform, i.e., \eqref{eq:qt2}, we identify $\boldsymbol{\chi}_d(\bm{x})=I_{d_w}(\mathcal{K}^\text{str},\mathcal{K}^\text{weak}\backslash\{w\})$. Further, given $|\bm{h}_{d_w}^{H} {\bm{q}_{w}^{(r)}}|^2 = \boldsymbol{\chi}_n^H(\bm{x})\boldsymbol{\chi}_n(\bm{x})$, it can be seen that $ \boldsymbol{\chi}_n(\bm{x})=\bm{h}_{d_w}^{H} {\bm{q}_{w}^{(r)}}$. Defining the auxiliary variable $a_w^{(r)}$ and applying the quadratic transform to \eqref{eq:gammawr}, equation \eqref{eq:gsw1} follows directly. The task now remains in determining the optimal auxiliary variable $a_w^{(r)}$. Hence we first take the partial derivative of \eqref{eq:gsw1} with respect to $a_w^{(r)}$
\begin{align}
0&=\frac{\partial}{\partial a_w^{(r)}}\Big(\gamma_{w}^{(r)} - 2 \text{Re}\big\{ (a_{w}^{(r)})^H \bm{h}_{d_w}^{H} \bm{q}_{w}^{(r)} \big\} \nonumber\\
&\hspace*{2cm}+ |a_{w}^{(r)}|^2 I_{d_w}(\mathcal{K}^\text{str},\mathcal{K}^\text{weak}\backslash\{w\})\Big),\label{eq:gsw1__} \\
&=2\text{Re}\big\{\bm{h}_{d_w}^{H} \bm{q}_{w}^{(r)}\big\} + 2(a_{w}^{(r)})^* I_{d_w}(\mathcal{K}^\text{str},\mathcal{K}^\text{weak}\backslash\{w\}).
\end{align}
The optimal auxiliary variable $(a_w^{(r)})^*$ can then be written as:
\begin{align}
(a_{w}^{(r)})^* &= \text{Re}\big\{\bm{h}_{d_w}^{H} \bm{\tilde{q}}_{w}^{(r)} \big\} /  \tilde{I}_{d_w}(\mathcal{K}^\text{str},\mathcal{K}^\text{weak}\backslash\{w\}). \label{eq:amn1_}
\end{align}
Similar derivations are readily applied to the constraints in \eqref{eq:SINR_all}, which completes the proof.

\bibliographystyle{IEEEtran}
\bibliography{bibliography}
\end{document}